\begin{document}

\setcounter{page}{321}
\publyear{2021}
\papernumber{2076}
\volume{182}
\issue{4}

  \finalVersionForARXIV

\title{Testing Boolean Functions Properties}

\author{Xie Zhengwei \\ 
  School of  Computer Science and Engineering \\
  Sun Yat-sen University,  Guangzhou 510006, China \\
   weixzh2010@163.com
\and
                 Qiu Daowen$^\dag$\thanks{Address for correspondence: School of Computer Sci. and Eng.,
                   Sun Yat-sen University,  Guangzhou 510006, China. \newline
                   $^\dag$ Also works: Instituto de Telecomunica\c{c}\~oes,
                       Dept. de Matem´atica, Instituto Superior T\'ecnico,  Lisbon, Portugal.  \newline  This work  is  partly supported
                         by the National Natural Science Foundation of China (Nos. 61572532,  61876195),
                         the Natural Science Foundation of Guangdong Province of China (No. 2017B030311011).
                         \newline \newline
          \vspace*{-6mm}{\scriptsize{Received  September 2021; \ revised November 2021.}}}
  \\
  School of  Computer Science and Engineering \\
  Sun Yat-sen Univ.,  Guangzhou 510006, China \\
  issqdw@mail.sysu.edu.cn
\and
Cai Guangya\\  
School of  Computer Science and Engineering \\
Sun Yat-sen University,  Guangzhou 510006, China \\
caigya@mail2.sysu.edu.cn
\and
Jozef Gruska\\ 
Faculty of Informatics\\
Masaryk University, Brno, Czech Republic\\
gruska@fi.muni.cz
  \and Paulo Mateus\\ 
Instituto de Telecomunica\c{c}\~oes\\
Dept. de Matem´atica, Instituto Superior T\'ecnico\\
Av. Rovisco Pais 1049-001 Lisbon, Portugal\\
pmat@math.ist.utl.pt
}

\maketitle

\runninghead{Z. Xie, et al.}{Testing Boolean Functions Properties}

\vspace*{-4mm}
\begin{abstract}
The goal in the area of functions property testing is to determine whether a given black-box Boolean function has a particular given property or is $\varepsilon$-far from having that property. We investigate here several types of properties testing for Boolean functions (identity, correlations and balancedness) using the Deutsch-Jozsa algorithm (for the Deutsch-Jozsa (D-J) problem) and also the amplitude amplification technique.

At first,  we study here a particular testing  problem: namely whether a given Boolean function $f$, of $n$ variables, is identical with a given function $g$ or is $\varepsilon$-far from $g$, where $\varepsilon$ is the parameter. We present a one-sided error  quantum algorithm to deal with this problem that has the query complexity $O(\frac{1}{\sqrt{\varepsilon}})$. Moreover, we show that  our quantum algorithm is optimal. Afterwards we show that the classical randomized query complexity of this problem is  $\Theta(\frac{1}{\varepsilon})$. Secondly, we consider the D-J problem from the perspective of functional correlations and let $C(f,g)$ denote the correlation of $f$ and $g$.  We propose an exact quantum algorithm for making distinction between $|C(f,g)|=\varepsilon$  and $|C(f,g)|=1$ using six queries, while the classical deterministic query complexity for this problem is $\Theta(2^{n})$ queries. Finally, we propose a one-sided error quantum query algorithm for testing whether one Boolean function is balanced versus $\varepsilon$-far balanced using $O(\frac{1}{\varepsilon})$ queries. We also prove here that our quantum algorithm for balancedness testing is optimal. At the same time, for this balancedness testing problem we present a classical randomized algorithm with query complexity of $O(1/\varepsilon^{2})$. Also this randomized algorithm is optimal. Besides, we link the problems considered here together and generalize them to the general case.
\end{abstract}

\begin{keywords}
Deutsch-Jozsa Algorithm, Quantum amplitude amplification, Identity testing, Correlation testing, Balancedness testing
\end{keywords}

\section{Introduction}

We deal here with the function isomorphism testing problem in a new way. This problem was at first explored by Fischer {\it et al.}  \cite{JCSS04}. Function isomorphism testing is to determine whether a given Boolean function $f$ is isomorphic to a given specific function $g$ (This means that it is equal to $g$ up to a permutation of its input variables). So far only  a few simple testing problems have been already well understood. For example, it has been shown in \cite{SIAM15} that partially symmetric functions are isomorphism testable using a constant number of queries.

 Moreover, the following upper bound on the number of queries needed for isomorphism testing has been shown: for any integer $k$, if $g$ is a so-called $k$-junta (that is if $g$ depends on at most $k$ variables), then it is possible to test $g$-isomorphism with poly$(k/\varepsilon)$ number of queries. This bound was recently improved by Alon {\it et al.} \cite{SIAM13}. They showed that $O(k\log k)$  queries are sufficient for such testing. Recently, Chen {\it et al.}~\cite{JACM18} proved that any non-adaptive algorithm that tests whether an unknown Boolean function is a $k$-junta, or it is $\varepsilon$-far from every $k$-junta, must make $\tilde{\Omega}(k^{3/2}/\varepsilon )$ queries for a wide range of parameters $k$ and $\varepsilon$. A more generalize tolerant testing of $k$-junta was investigated in the reference \cite{SODA18}. Blais and O'Donnell \cite{CCC10}, Alon {\it et al.} \cite{SIAM13} proved more general lower bounds for the isomorphism testing. In particular, Blais {\it et al.} \cite{SIAM15} showed that all partially symmetric functions were efficiently isomorphism testable ({\it i.e.} - with a constant number of queries). In addition, there have been already also a lot of  results on other types of testing. For example, on the property testing such as linearity testing \cite{IEEE96}, monotonicity testing \cite{SIAM16,FOCS14}, group testing \cite{FI09} and so on.

Quantum query complexity \cite{QC1999,IC2020} is a well-known black-box model for quantum computation, in which the resource measured is the number of queries needed to compute a function. In particular, one provides a  ``black-box access" to a function $f$, meaning that the quantum algorithm can apply  some unitary transformation $U_{f}$ that maps basis states of the form $|x\rangle |y\rangle$ to basis states of the form $|x\rangle|y\oplus f(x)\rangle$ (or $|x\rangle$ to $(-1)^{f(x)}|x\rangle$, where $f$ is a Boolean function). This complexity model has been  behind very first great algorithmic successes of quantum computing like in the search algorithm of Grover \cite{STOC96} and in the algorithm for period finding which is a subroutine of Shor's  quantum factoring algorithm \cite{SIAM99}.

 A natural goal when such a model is considered is to minimize the number of queries to the oracle that are needed to solve the given problem. This minimum is the query complexity of the problem. However, quantum algorithms for property testing were much less studied than classical algorithms for this problem in the past. However, quantum property testing has been receiving increasing attentions in the last few years, both for testing properties of the classical objects and also for testing properties of quantum objects. For example, Buhrman {\it et al.}~\cite{SIAM08} showed that there exist languages with efficient quantum property testers but without so efficient classical testers. Hillery and Andersson \cite{PRA062329} presented two quantum algorithms for testing the linearity and the permutation invariance of Boolean functions. Moreover, Bravyi {\it et al.} \cite{IEEE11} described quantum algorithms for testing properties of distributions. More about property testing can be found in \cite{arXiv13}.

A more general concept, and also very interesting and important, than functions isomorphism is the affine equivalence of Boolean functions. Affine equivalence classification of Boolean functions has significant applications in logic synthesis and in cryptography. The affine equivalence classification of the cosets has been studied, for example, by  Hou \cite{DM94} and Zhang \cite{IEEE16}. It is easy to see that if $f$ and $g$ are affine equivalent, then they have the same weight and nonlinearity \cite{DCC12}. Canright {\it et al.} \cite{DM15} gave a simple necessary and sufficient condition for deciding whether two monomial rotation symmetric Boolean functions are affine equivalent. Fuller \cite{M03} made a detailed analysis of affine equivalent Boolean functions for cryptography.

Of a special interest are also the following results. Batu {\it et al.}\cite{FOCS01} considered the problem of testing distribution identity. Diakonikolas {\it et al.}~\cite{SODA15} studied the problem of identity testing for structured distributions. Moreover, in this paper, we focus on the identity testing for  Boolean functions (observe that the identity can be seen as a special case of the affine equivalence and isomorphism).

Moreover, Cai et al.\cite{MST96} computed the correlation between any two symmetric Boolean functions. They showed that every symmetric Boolean function having an odd period has an exponentially small correlation with the parity function. Castro et al.\cite{Ejc11} computed the asymptotic behavior of symmetric Boolean functions and derived a formula that allows to determine if a symmetric Boolean function is asymptotically not balanced. For example, for any integers $k_1, k_2$, ${\lim_{n \to \infty}} C(\sigma_{n,k_{1}},\sigma_{n,k_{2}})=0$, but $C(\sigma_{n,k_{1}},\sigma_{n,k_{2}})\neq0$, where each $\sigma_{n,k}$ is an elementary symmetric polynomial of $n$ variables and of the degree $k$ and $C(\sigma_{n,k_{1}},\sigma_{n,k_{2}})$ denotes  the correlation between this two functions. Castro et al.\cite{AC14} investigated the asymptotic behavior of  $ C(F_{1},F_{2})$, where $F_{1}= \sigma_{n,k_{1}} +...+\sigma_{n,k_{s}}$, $F_{2}=F(X_{1},..., X_{j})$ is a Boolean function in the first $j$-variables ($j$ fixed). Recently, Castro et al.\cite{AECC18} showed that the generalized Walsh transforms of symmetric and also rotation symmetric Boolean functions are linearly recurrent. This subject can been studied from the point of view of complexity theory or from the algebraic point of view. In this paper, we study such correlations from the perspective of property testing.

 Balancedness is one of  other important cryptographic properties of Boolean functions. There have been already many results, see for example \cite{IEEE16,DCC18} concerning the balancedness of Boolean functions in cryptography. They also inspired us to further study balancedness from the perspective of the property testing. In particular, we consider here  quantum and also classical randomized algorithms  for  balancedness testing.

The remainder of this paper is organized as follows. In Section 2, we present some  of the notations and definitions that will be used in the rest of the paper, as well as a review of the Deutsch-Jozsa algorithm and the quantum amplitude amplification theorem. Then, in Section 3, a quantum algorithms for testing Boolean functions identity is given and explored. Moreover, we provide the lower bound of testing Boolean functions identity. We further prove that the randomized query complexity of the problem  is  $\Theta(\frac{1}{\varepsilon})$. Also, we compare the complexity of testing identity of distributions with the complexity of testing identity of Boolean functions. In Section 4 and Section 5, we consider correlation testing problems and  balancedness testing problems, respectively. Finally, conclusions and open problems are summarized in Section 6.

\section{Preliminaries}\label{Sec2}

In this section, we first introduce some basic notions related to Boolean functions and quantum computation. For more details, see literature
\cite{QC1999,M03}.

Let $f:\{0,1\}^{n}\to\{0,1\}$ be a Boolean function of $n$ variables, and let ${B_{n}}$ denote the set of all Boolean functions of $n$ variables.
\begin{definition} (\textbf{Property tester}, \cite{SIAM16})
 Let $P_n$ be a set of Boolean functions of $n$ variables with a certain property. An $\varepsilon$-tester, $\varepsilon>0$, for $P_n$ is a randomized or quantum algorithm which can query an unknown Boolean function $f:\{0,1\}^{n}\rightarrow\{0,1\}$ on a small number of inputs and
\begin{itemize}
  \item [(1)] accepts $f$ with the probability at least $\frac{2}{3}$ when $f \in P_n$,
  \item [(2)] rejects $f$ with the probability at least $\frac{2}{3}$ when $f$ is $\varepsilon$-far from any function in $P_n$,
\end{itemize}
 where ``$f$ is $\varepsilon$-far from $P_n$" means that  $\mbox{Ham}(f,g)=|\{x\in\{0,1\}^{n}:f(x)\neq g(x)\}|\geq\varepsilon2^{n}$ holds for every $g \in P_n$. We also denote by $\mbox{Dist}(f,g)=\mbox{Ham}(f,g)/2^{n}$.
\end{definition}
\begin{definition}
When $x$ is a string, let $|x|$ denote the Hamming weight of the string $x$, i.e., the number of ones in the string $x$.
\end{definition}
\begin{definition}
Let $x=(x_{1},\ldots,x_{n})$ and $\omega=(\omega_{1},\ldots,\omega_{n})$ both belong to $\{0,1\}^{n}$. The inner product of $x$ and $\omega$ is defined as $x \cdot \omega=x_{1}\omega_{1}\oplus \ldots \oplus x_{n}\omega_{n}$, \mbox{where}
$\oplus$ is addition modulo 2.
\end{definition}
\begin{definition}\label{Walsh}
 Suppose $f\in{B_{n}}$. The Walsh transform of $f$, is defined as
 \begin{displaymath}
W_{f}(\omega)=\frac{1}{2^{n}}\sum_{x\in\{0,1\}^{n}}(-1)^{f(x)+\omega \cdot x}, \mbox{where} \hspace{0.15cm}\omega\in\{0,1\}^{n}.
\end{displaymath}
\end{definition}
\begin{definition}( \cite{AC14} )\label{correlation}
The correlation $C(f,g)$ between two Boolean functions $f$ and $g$ of $n$ variables is defined as the number of arguments for which functions $f$ and $g$ have the same value, minus the number of arguments  on which they disagree, all divided by $2^{n}$, i.e.,
\begin{displaymath}
 C(f,g)=\frac{1}{2^{n}}\sum_{x\in\{0,1\}^{n}}{(-1)}^{f(x)+g(x)}.
\end{displaymath}
\end{definition}For the case of one function $f$, let's denote $ C(f)=\frac{1}{2^{n}}\sum_{x\in\{0,1\}^{n}}{(-1)}^{f(x)}$. A Boolean function $f(x)$ is called balanced if $C(f)=0$.
\begin{remark}
According to Definition \ref{Walsh} and \ref{correlation}, we have $ C(f)=W_{f}(0)$.
\end{remark}
\begin{definition}\label{isomorphic}
(\cite{CCC10}) Two Boolean functions $f,g\in{B_{n}}$ are said to be isomorphic if they are identical up to any reordering of input variables. More precisely, we say that $f,g\in{B_{n}}$ are isomorphic (to each other) if there is a permutation $\sigma$ on $[n]$ such that for every $x=(x_{1},x_{2},\ldots,x_{n})\in \{0,1\}^{n}$, $f(x_{1},x_{2},\ldots,x_{n})=g(x_{\sigma_{1}},x_{\sigma_{2}},\ldots,x_{\sigma_{n}})$, where $\sigma(1,2,\ldots,n)=(\sigma_1,\sigma_2,\ldots,\sigma_n)$.
\end{definition}

\begin{definition}\label{affine}
(\cite{IEEEC16}) Two Boolean functions  $f(x_{1},x_{2},\ldots,x_{n})$ and $h(x_{1},x_{2},\ldots,x_{n})$  are said to be affine equivalent if $h(x_{1},x_{2},\ldots,x_{n})$ can be written as $h(x_{1},x_{2},\ldots,x_{n})=h(X)=f(AX+b)$, where $A$ is an $n\times n$ nonsingular matrix over the Boolean field ${F_{2}}$, ${X}$ is a column vector whose transpose is $X^{T}=[x_{1},x_{2},\ldots,x_{n}]$, and $b$ is an $n$-dimensional vector over ${F_{2}}$. The addition $AX+b$ is also over ${F_{2}}$.
\end{definition}

Affine equivalent functions have many similar cryptographic properties, such as correlation immunity, resiliency and propagation characteristics (For example, see \cite{IEEEC16}). Moreover, Zhang et al.\cite{IEEEC16} presented a new method for computing affine equivalence classes of Boolean functions due to a  group isomorphism. In the following proposition, we establish a connection between the isomorphism relation  and the affine equivalence relation of two Boolean functions.

\begin{proposition}
 Let $f,g\in{B_{n}}$. Then $f$ is isomorphic to $g$ if and only if there exists $A=\left(
                                                                                                           \begin{array}{c}
                                                                                                             e_{\sigma(1)} \\
                                                                                                             e_{\sigma(2)} \\
                                                                                                             \vdots \\
                                                                                                             e_{\sigma(n)} \\
                                                                                                           \end{array}
                                                                                                         \right)
$, $b=\textbf{0}$ such that $f(X)=g(AX+b)$, where $e_{1}=(1,0,\ldots,0)$,\ldots,$e_{n}=(0,0,\ldots,1)$ and $\sigma$ is a permutation on $[n]$.
\end{proposition}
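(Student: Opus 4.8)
The plan is to prove both directions by directly unwinding Definitions \ref{isomorphic} and \ref{affine}, the crucial observation being that a matrix whose rows are a permutation of the standard basis vectors acts on a column vector precisely by permuting its coordinates. Concretely, I first record the computation that drives everything: if $A$ has rows $e_{\sigma(1)},\ldots,e_{\sigma(n)}$, then the $i$-th entry of $AX$ is $e_{\sigma(i)}\cdot X = x_{\sigma(i)}$, so that
\begin{displaymath}
AX=(x_{\sigma(1)},x_{\sigma(2)},\ldots,x_{\sigma(n)})^{T}.
\end{displaymath}
Taking $b=\textbf{0}$ gives $g(AX+b)=g(x_{\sigma(1)},x_{\sigma(2)},\ldots,x_{\sigma(n)})$, which is exactly the right-hand side appearing in the isomorphism condition of Definition \ref{isomorphic}.

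For the forward direction, I would assume $f$ is isomorphic to $g$, so by Definition \ref{isomorphic} there is a permutation $\sigma$ on $[n]$ with $f(x_{1},\ldots,x_{n})=g(x_{\sigma(1)},\ldots,x_{\sigma(n)})$ for all $x$. I then build $A$ with rows $e_{\sigma(1)},\ldots,e_{\sigma(n)}$ and set $b=\textbf{0}$. Two things must be checked: that $A$ qualifies as the matrix in Definition \ref{affine}, i.e. that it is nonsingular over $F_{2}$, and that $f(X)=g(AX+b)$. The nonsingularity is immediate since $A$ is a permutation matrix (its rows are the distinct standard basis vectors, hence linearly independent over any field, in particular over $F_{2}$), and the identity $f(X)=g(AX+b)$ is exactly the displayed computation above combined with the isomorphism hypothesis.

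For the converse, I would assume such $A$ and $b=\textbf{0}$ exist with rows $e_{\sigma(1)},\ldots,e_{\sigma(n)}$ and $f(X)=g(AX+b)$. Reusing the same coordinate computation, $g(AX+b)=g(x_{\sigma(1)},\ldots,x_{\sigma(n)})$, so $f(x_{1},\ldots,x_{n})=g(x_{\sigma(1)},\ldots,x_{\sigma(n)})$ for every $x\in\{0,1\}^{n}$, which is precisely the isomorphism condition witnessed by $\sigma$. This closes the equivalence.

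I expect the proof to be essentially a bookkeeping exercise rather than containing a genuine difficulty; the one point demanding care is the indexing convention for $\sigma$. One must verify that stacking the rows as $e_{\sigma(1)},\ldots,e_{\sigma(n)}$ reproduces the coordinate permutation $x\mapsto(x_{\sigma(1)},\ldots,x_{\sigma(n)})$ exactly as written in Definition \ref{isomorphic}, and not its inverse; getting this direction right (together with the observation that $b=\textbf{0}$ is forced because isomorphism permutes variables without adding a constant shift) is the only place an error could creep in.
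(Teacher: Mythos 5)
Your proof is correct and follows the same route as the paper, whose entire proof is the single line ``This follows easily from Definition \ref{isomorphic} and Definition \ref{affine}.'' You simply spell out the details the paper leaves implicit --- the computation $AX=(x_{\sigma(1)},\ldots,x_{\sigma(n)})^{T}$, the nonsingularity of the permutation matrix, and both directions of the equivalence --- all of which check out.
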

\begin{proof}
  This follows easily from Definition \ref{isomorphic} and Definition \ref{affine}.
\end{proof}
\begin{definition}
(\cite{PRA062331})  Let $f:\{0, 1\}^n \rightarrow \{0, 1\}$ be a partial Boolean function, and  $D\subseteq \{0, 1\}^n$ be its domain of definition. If for any $x\in D$ and any $y\in \{0, 1\}^n$ with $|x|=|y|$, it holds that $y\in D$ and $f(x)=f(y)$, then $f$ is called a {\em  symmetric  partial Boolean function}. When $D=\{0,1\}^n$, $f$ is a symmetric function.
\end{definition}
\begin{definition}
(\cite{IC2020,PRA062331}) Let $f$ be a partial Boolean function with a domain of definition $D\subseteq\{0,1\}^n$.
 We say a real multilinear polynomial $p$ approximates $f$ with the error $1/3$ if:
\begin{enumerate}
\item[(1)]  $|p(x)-f(x)|\leq 1/3$ for all $x\in D$;
\item[(2)]  $0\leq p(x)\leq 1$ for all $x\in\{0,1\}^n$.
\end{enumerate}
The approximate degree of $f$ with the error $1/3$, denoted by $\widetilde{\text{deg}}(f)$, is the minimum degree among all real multilinear polynomials  that approximate $f$ with the error $1/3$.
\end{definition}

\begin{lemma}\label{Lemma-degree-II}
(\cite{IC2020,PRA062331}) Let $f$ be a symmetric  partial  Boolean  function  over $\{0,1\}^n$ with the domain of definition $D$,
 and $\widetilde{\text{deg}}(f)= d$. Then there exists a real multilinear polynomial $q$  that approximates $f$ with the error $1/3$ and $q$ can be written as
\begin{align}
 q(x)=c_0+c_1V_1+c_2V_2+\cdots+c_dV_d,
\end{align}
where $c_i\in{R}$, $V_j$ denotes the sum of all products of $j$ different variables, i.e.,$V_1=x_1+\cdots+x_n$, $V_2=x_1x_2+x_1x_3+\cdots+ x_{n-1}x_n$, etc.
\end{lemma}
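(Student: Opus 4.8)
The plan is to use the classical symmetrization technique of Minsky and Papert, adapted to the partial-function setting. Since $\widetilde{\text{deg}}(f)=d$, the definition of approximate degree supplies a real multilinear polynomial $p$ of degree $d$ that approximates $f$ with error $1/3$; this $p$ is in general \emph{not} symmetric, and the entire task is to replace it by a symmetric polynomial of no larger degree. For a permutation $\pi\in S_n$ I write $x^{\pi}=(x_{\pi(1)},\ldots,x_{\pi(n)})$ and $p^{\pi}(x)=p(x^{\pi})$, and set
\begin{align}
 q(x)=\frac{1}{n!}\sum_{\pi\in S_n}p^{\pi}(x).
\end{align}
Each $p^{\pi}$ is multilinear of degree $d$, since relabelling variables preserves both multilinearity and total degree; hence $q$ is multilinear of degree at most $d$. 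Moreover $q$ is symmetric, because precomposing with any $\tau\in S_n$ merely reindexes the sum over $S_n$ and leaves $q$ unchanged.

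Next I would verify that $q$ still approximates $f$ with error $1/3$, and this is precisely the step where the hypothesis that $f$ is a \emph{symmetric} partial function is indispensable. For $x\in D$ and any $\pi$ we have $|x^{\pi}|=|x|$, so the definition of a symmetric partial Boolean function guarantees $x^{\pi}\in D$ and $f(x^{\pi})=f(x)$. Consequently $f(x)=\frac{1}{n!}\sum_{\pi}f(x^{\pi})$, and
\begin{align}
 |q(x)-f(x)|\le\frac{1}{n!}\sum_{\pi\in S_n}\bigl|p(x^{\pi})-f(x^{\pi})\bigr|\le\frac{1}{3},
\end{align}
where each summand is bounded because $x^{\pi}\in D$. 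For the global condition, $0\le p(y)\le 1$ for every $y\in\{0,1\}^n$ forces $0\le q(x)\le 1$ on the whole cube, as an average of numbers in $[0,1]$. Thus $q$ meets both requirements in the definition of an approximating polynomial.

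Finally I would put $q$ into the stated form. Because $q$ is multilinear, it is a linear combination of squarefree monomials $\prod_{i\in S}x_i$ with $|S|\le d$; because $q$ is symmetric and $S_n$ acts transitively on the squarefree monomials of each fixed degree $j$, all such monomials must carry a single common coefficient $c_j$. Summing the degree-$j$ monomials yields exactly $V_j$, so $q=\sum_{j=0}^{d}c_jV_j=c_0+c_1V_1+\cdots+c_dV_d$. Since $q$ approximates $f$ and $d=\widetilde{\text{deg}}(f)$ is the minimal degree of any approximator, the degree of $q$ cannot drop below $d$, so nothing is lost by truncating at $V_d$. The only genuine obstacle is the middle step: the argument rests entirely on being able to average $f$ over coordinate permutations without leaving the domain $D$ and without changing its value, which is supplied solely by the symmetric-partial-function hypothesis; once that is in hand, the multilinearity and the collection of monomials into the $V_j$ are routine.
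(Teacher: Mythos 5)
Your proof is correct, and it is essentially the standard Minsky--Papert symmetrization argument; note that the paper itself offers no proof of this lemma (it is quoted from the cited references \cite{IC2020,PRA062331}), and that is precisely the argument used there. Every step checks out: averaging $p$ over $S_n$ preserves multilinearity, degree, and the $[0,1]$ bound on the cube; the symmetric-partial-function hypothesis is invoked exactly where it is needed (so that $x^{\pi}\in D$ and $f(x^{\pi})=f(x)$, keeping the pointwise error bound valid); and the collection of equal-coefficient monomials into the $V_j$ follows from uniqueness of the multilinear representation.
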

\begin{remark}Note that $V_{j}$ assumes value $\left(
  \begin{array}{c}
    |x| \\
    j \\
  \end{array}
\right)$$=|x|(|x|-1)(|x|-2)\cdots(|x|-j+1)/j!$ for any string $x$, and $\left(
  \begin{array}{c}
    |x| \\
    j \\
  \end{array}
\right)$ is a polynomial of degree $j$ of $|x|$.
 Hence  $\phi(|x|)=$ $c_0$+$c_1$ $\left(
  \begin{array}{c}
    |x| \\
    1 \\
  \end{array}
\right)$+$c_2$ $\left(
  \begin{array}{c}
    |x| \\
    2 \\
  \end{array}
\right)$+$\cdots$+$c_d$$\left(
  \begin{array}{c}
    |x| \\
    j \\
  \end{array}
\right)$$= q(x)$. Therefore we can obtain a single-variate polynomial $\phi(|x|)$ which approximates $f$ with the error $1/3$ such that $\text{deg}(\phi)=\widetilde{\text{deg}}(f)$.
\end{remark}
\begin{lemma}\label{Lemma-lower-bound1}
(\cite{MZ64,TCS02}) Let $p$ be a polynomial with the following properties:
\begin{enumerate}
\item  For any integer $0\leq i\leq n$, we have $b_{1}\leq p(i)\leq b_{2}$ for some fixed $b_1$ and $b_2$.
\item  For some real $0\leq x\leq n$, the derivative $p'(x)$ of $p$ satisfies $\mid p'(x)\mid\geq c$ for some $c$.
\end{enumerate}
Then, $ \mbox{deg}(p)\geq \sqrt{cn/(c+b_{2}-b_{1})}$.
\end{lemma}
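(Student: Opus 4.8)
The plan is to derive the bound from the classical Markov inequality for polynomials together with an elementary mean-value argument that converts the pointwise control at the integers into control over the whole interval $[0,n]$. First I would record the rescaled form of Markov's inequality: writing $d=\deg(p)$ and $M=\max_{x\in[0,n]}|p(x)|$, one composes $p$ with the affine bijection from $[-1,1]$ onto $[0,n]$ and applies the classical inequality on $[-1,1]$; the chain rule contributes a factor $\tfrac2n$, so that $\max_{x\in[0,n]}|p'(x)|\le \tfrac{2d^2}{n}M$. It is also harmless to center the polynomial: replacing $p$ by $p-\tfrac{b_1+b_2}{2}$ changes neither $p'$ nor $\deg p$, so I may assume $|p(i)|\le a:=\tfrac{b_2-b_1}{2}$ for every integer $0\le i\le n$, with $b_2\ge b_1$.

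The main step, and the one I expect to be the real obstacle, is to bound $M$ — the maximum of $|p|$ over the \emph{continuous} interval — in terms of the quantity $a$, which only constrains $p$ at the integers. Let $x_0\in[0,n]$ satisfy $|p(x_0)|=M$, and by symmetry assume $p(x_0)=M$. If $x_0$ is an integer (in particular if it is an endpoint) then $M\le a$ and there is nothing to prove. Otherwise $x_0$ is interior; letting $i_0$ be the nearest integer, we have $|x_0-i_0|\le \tfrac12$ and $p(i_0)\le a$, and the mean value theorem on the segment between $i_0$ and $x_0$ produces a point $\xi\in[0,n]$ with $|p'(\xi)|\ge \frac{M-p(i_0)}{|x_0-i_0|}\ge 2(M-a)$. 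Feeding this into the rescaled Markov inequality gives $2(M-a)\le \tfrac{2d^2}{n}M$, whence $M\bigl(1-\tfrac{d^2}{n}\bigr)\le a$.

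From here the conclusion follows by assembling the pieces. If $n\le d^2$ then $\deg(p)=d\ge\sqrt{n}\ge\sqrt{cn/(c+b_2-b_1)}$ since $c/(c+b_2-b_1)\le 1$, so assume $n>d^2$; then the previous inequality yields $M\le \frac{an}{n-d^2}$. Now I invoke the derivative hypothesis: there is $x^*\in[0,n]$ with $|p'(x^*)|\ge c$, and the rescaled Markov inequality bounds it by $\tfrac{2d^2}{n}M\le \tfrac{2d^2}{n}\cdot\frac{an}{n-d^2}=\frac{2ad^2}{n-d^2}$. Substituting $a=\tfrac{b_2-b_1}{2}$ gives $c\le \frac{(b_2-b_1)d^2}{n-d^2}$, which rearranges (multiplying by $n-d^2>0$) to $cn\le d^2(c+b_2-b_1)$ and hence to $\deg(p)\ge\sqrt{cn/(c+b_2-b_1)}$. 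The only delicate points are the bookkeeping of the factor $2$ — which cancels exactly because the mean-value gap $|x_0-i_0|\le\tfrac12$ is matched by the $\tfrac2n$ in the rescaled Markov bound — and the sign case analysis for $p(x_0)=\pm M$, both of which are routine.
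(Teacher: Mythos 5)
Your proof is correct, and it is essentially the canonical argument: the paper itself states this lemma as an import from \cite{MZ64,TCS02} without giving any proof, and the proof in those sources (Ehlich--Zeller, reproduced in Buhrman--de Wolf and in Nisan--Szegedy) is exactly your combination of the rescaled Markov inequality $\max_{[0,n]}|p'|\le \tfrac{2d^2}{n}\max_{[0,n]}|p|$ with the mean-value step $2(M-a)\le \max_{[0,n]}|p'|$ relating the continuous maximum $M$ to the integer-point bound $a=\tfrac{b_2-b_1}{2}$. All the steps check out, including the cancellation of the factor $2$, the reduction to $p(x_0)=M$ by negating $p$, and the split into the cases $n\le d^2$ (where $d\ge\sqrt{n}$ already suffices) and $n>d^2$ (where $M\le \tfrac{an}{n-d^2}$ yields $cn\le d^2(c+b_2-b_1)$).
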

\begin{lemma}\label{Lemma-lower-bound2}
(\cite{PRA062331}) For any partial Boolean function $f$, $\widetilde Q(f)\geq \frac{1}{2}\widetilde{\text{deg}}(f)$, where $\widetilde Q(f)$ denotes the quantum query complexity for $f$ with the bounded-error $1/3$.
\end{lemma}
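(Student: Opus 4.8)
The plan is to use the \emph{polynomial method} of Beals, Buhrman, Cleve, Mosca and de Wolf: I would show that the acceptance probability of any $T$-query bounded-error quantum algorithm for $f$ is a real multilinear polynomial in the input bits of degree at most $2T$, and that this polynomial approximates $f$ in exactly the sense required by the definition of $\widetilde{\text{deg}}(f)$. The bound then follows by minimizing over all algorithms.

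First I would fix a quantum query algorithm $A$ that computes $f$ with error at most $1/3$ using $T=\widetilde Q(f)$ queries, and write its computation as $|\psi_x\rangle=U_T O_x U_{T-1}\cdots U_1 O_x U_0|0\rangle$, where each $U_t$ is a fixed unitary independent of the input $x$ and $O_x$ is the phase oracle acting as $|i\rangle\mapsto(-1)^{x_i}|i\rangle$ on the query register. The central claim, proved by induction on the number of queries, is that after $t$ queries every amplitude of $|\psi_x\rangle$ is a multilinear polynomial in $x_1,\ldots,x_n$ of degree at most $t$. The base case is trivial, since $U_0|0\rangle$ has constant amplitudes; for the inductive step, applying $O_x$ multiplies an amplitude indexed by query value $i$ by $(-1)^{x_i}=1-2x_i$, raising the degree by at most one, while the input-independent unitary $U_t$ only takes complex linear combinations of amplitudes and hence does not increase the degree.

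Next I would form the acceptance probability $P(x)=\sum_{k\in\text{accept}}|\alpha_k(x)|^2$, where the $\alpha_k(x)$ are the final amplitudes. Since $|\alpha_k(x)|^2=\alpha_k(x)\overline{\alpha_k(x)}$ is a product of two polynomials of degree at most $T$, the function $P$ is a real polynomial of degree at most $2T$; reducing it modulo $x_i^2=x_i$ makes it multilinear without increasing its degree. Because $P(x)$ is a probability we automatically have $0\le P(x)\le 1$ for all $x\in\{0,1\}^n$, and because $A$ computes $f$ with error $1/3$ we have $|P(x)-f(x)|\le 1/3$ for every $x$ in the domain $D$. These are precisely the two conditions in the definition of an approximating polynomial, so $\widetilde{\text{deg}}(f)\le\deg(P)\le 2T=2\widetilde Q(f)$, which rearranges to the claimed inequality.

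The main obstacle is the degree-tracking induction, together with the care needed to make the argument go through for a \emph{partial} function: the approximation guarantee $|P(x)-f(x)|\le 1/3$ is only available on $D$, whereas the boundedness $0\le P(x)\le 1$ must hold on all of $\{0,1\}^n$. Fortunately these are exactly the asymmetric requirements already built into the definition of $\widetilde{\text{deg}}(f)$ used in the excerpt, so once the polynomial $P$ is constructed no further work is needed to match them. A minor point to handle cleanly is the equivalence between the bit-flip oracle $|i\rangle|b\rangle\mapsto|i\rangle|b\oplus x_i\rangle$ and the phase oracle above; either may be used, since writing the flipped bit $b\oplus x_i$ as $(1-x_i)b+x_i(1-b)$ also keeps the per-query degree increase at one.
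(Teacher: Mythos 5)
The paper does not actually prove this lemma---it imports it from Qiu and Zheng \cite{PRA062331}, where (following Beals, Buhrman, Cleve, Mosca and de Wolf) it is established by precisely the polynomial-method argument you give: amplitudes after $t$ queries are degree-$t$ polynomials in the input bits, so the acceptance probability is a real multilinear polynomial of degree at most $2T$ that approximates $f$. Your proof is correct, including the one point that needs care for a partial function---the approximation bound $|P(x)-f(x)|\le 1/3$ is required only on the domain $D$ while the boundedness $0\le P(x)\le 1$ holds on all of $\{0,1\}^n$---which matches exactly the definition of $\widetilde{\text{deg}}(f)$ used in the paper.
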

We review  now briefly the Deutsch-Jozsa algorithm \cite{PRS92} task and the quantum amplitude amplification \cite{CM02}. The Deutsch-Jozsa algorithm is to distinguish for a given Boolean function $f$ one of two cases with certainty using only one query, under the promise that the function $f$ is either constant or balanced. (Here, we do not describe details of the D-J algorithm - they will not be of importance in the following.)

\medskip
Let $\mathcal{A}$ be any quantum algorithm (a unitary operator) that acts on a Hilbert space $\mathcal H$ and uses no measurement. Let $|\Psi\rangle =\mathcal{A}|0\rangle$ denote the state obtained by applying the unitary $\mathcal{A}$ to the initial zero state. The amplification process is realized by repeating an application of the following unitary operator on the state $|\Psi\rangle$,
\begin{align}\label{1}
Q=-\mathcal {A}S_{0}\mathcal{A}^{-1}S_{\chi},
\end{align}
where the operator $S_{\chi}$ conditionally changes the phase of so called good states,
\begin{displaymath}
|x\rangle\rightarrow \left\{ \begin{array}{ll}
-|x\rangle & \textrm{if $\chi(x)=1$},\\
\hspace{0.25cm}|x\rangle & \textrm{if $\chi(x)=0,$}
\end{array} \right.
\end{displaymath}
while the operator $S_{0}$ changes the sign of the amplitudes if and only if the state is the zero state $|0\rangle$. Write $|\Psi\rangle=|\Psi_{1}\rangle+|\Psi_{0}\rangle$ is a superposition of so called good and so called bad components (basis substates) of $|\Psi\rangle$. We have that
\begin{align}\label{1}
Q|\Psi_{1}\rangle&=(1-2a)|\Psi_{1}\rangle-2a|\Psi_{0}\rangle,\\[3pt]
Q|\Psi_{0}\rangle&=2(1-a)|\Psi_{1}\rangle+(1-2a)|\Psi_{0}\rangle,\\
Q^{j}|\Psi\rangle&=\frac{1}{\sqrt{a}}\sin((2j+1)\theta_{a})|\Psi_{1}\rangle+\frac{1}{\sqrt{1-a}}\cos((2j+1)\theta_{a})|\Psi_{0}\rangle,
\end{align}where $\sin^{2}(\theta_{a})=a=\langle\Psi_{1}|\Psi_{1}\rangle$.

\begin{lemma}\label{lm1}
 (\textbf{Quadratic speedup} \cite{CM02}) Let $\mathcal {A}$ be any quantum algorithm that uses no measurements, and let $\chi:Z\rightarrow\{0,1\}$ be any Boolean function. Let $a$ be the initial success probability of $\mathcal {A}$. Suppose $a>0$, and set $m=\lfloor {\pi/4\theta_{a}}\rfloor$, where $\theta_{a}$ is defined such that $\sin^{2}(\theta_{a})=a$ and $0<\theta_{a}\leq\pi/2$. Then, if we compute $Q^{m}\mathcal{A}|0\rangle$ and measure the system, the outcome is good with probability at least $\max(1-a,a)$.
\end{lemma}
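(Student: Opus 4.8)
The plan is to compute exactly the probability that measuring the amplified state $Q^m\mathcal{A}|0\rangle = Q^m|\Psi\rangle$ yields a good basis state, and then reduce the lemma to a single scalar inequality. First I would invoke the closed-form action of $Q^j$ on $|\Psi\rangle$ recorded above, namely
\[
Q^{m}|\Psi\rangle=\frac{1}{\sqrt{a}}\sin((2m+1)\theta_{a})\,|\Psi_{1}\rangle+\frac{1}{\sqrt{1-a}}\cos((2m+1)\theta_{a})\,|\Psi_{0}\rangle .
\]
Since $|\Psi_{1}\rangle$ and $|\Psi_{0}\rangle$ are the orthogonal good and bad components with $\langle\Psi_{1}|\Psi_{1}\rangle=a$ and $\langle\Psi_{0}|\Psi_{0}\rangle=1-a$, measuring $Q^{m}|\Psi\rangle$ returns a good state with probability equal to the squared norm of its good part, which is $\frac{1}{a}\sin^{2}((2m+1)\theta_{a})\cdot a=\sin^{2}((2m+1)\theta_{a})$. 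Thus everything reduces to showing $\sin^{2}((2m+1)\theta_{a})\geq\max(1-a,a)$.

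Next I would pin down where the amplified angle $(2m+1)\theta_{a}$ sits. From $m=\lfloor \pi/4\theta_{a}\rfloor$ we have $m\leq \pi/4\theta_{a}<m+1$; multiplying through by $4\theta_{a}>0$ and rearranging gives $\pi/2-\theta_{a}<(2m+1)\theta_{a}\leq\pi/2+\theta_{a}$. Writing $\delta=(2m+1)\theta_{a}-\pi/2$, so that $|\delta|\leq\theta_{a}\leq\pi/2$, we obtain $\sin((2m+1)\theta_{a})=\cos\delta$, and since $\cos$ is nonnegative and decreasing on $[0,\pi/2]$, it follows that $\sin^{2}((2m+1)\theta_{a})=\cos^{2}\delta\geq\cos^{2}\theta_{a}=1-\sin^{2}\theta_{a}=1-a$. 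This already settles the lemma whenever $a\leq\frac12$, because then $1-a\geq a$, so $\max(1-a,a)=1-a$ and the success probability is at least $1-a$.

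Finally I would dispatch the remaining regime $a>\frac12$. Here $\sin^{2}\theta_{a}=a>\frac12$ forces $\theta_{a}>\pi/4$, hence $\pi/4\theta_{a}<1$ and $m=\lfloor \pi/4\theta_{a}\rfloor=0$. For $m=0$ no amplification is applied, so the success probability is simply $\sin^{2}\theta_{a}=a=\max(1-a,a)$, which completes the argument. I do not expect a genuine obstacle: the proof is essentially a one-line reduction to a trigonometric estimate. The only points needing care are the floor-function bound that keeps $(2m+1)\theta_{a}$ within $\theta_{a}$ of $\pi/2$, and the case split required to cover the $a$-side of the maximum, since the uniform bound $1-a$ cannot by itself reach $a$ once $a>\frac12$.
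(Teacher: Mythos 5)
Your proof is correct: the reduction of the success probability to $\sin^{2}((2m+1)\theta_{a})$ via the displayed closed form for $Q^{j}|\Psi\rangle$, the floor-function bound $|(2m+1)\theta_{a}-\pi/2|\leq\theta_{a}$ giving the $1-a$ estimate, and the observation that $a>\tfrac12$ forces $m=0$ (so the probability is exactly $a$) together establish the claim. The paper itself does not prove this lemma but quotes it from \cite{CM02}, and your argument is essentially the standard proof given there, built on exactly the amplitude-amplification formulas the paper records just before the lemma statement.
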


When the success probability $a$ of quantum algorithm $\mathcal{A}$ is known, we redefine the unitary operator $Q$ to find a good solution with certainty. Let
\begin{align}\label{5}
Q=Q(\mathcal {A},\phi,\varphi,\chi)=-\mathcal {A}S_{0}(\phi)\mathcal{A}^{-1}S_{\chi}(\varphi).
\end{align}
Here the operator $S_{\chi}(\varphi)$ conditionally changes the phase of good states,
\begin{displaymath}
|x\rangle\rightarrow \left\{ \begin{array}{ll}
e^{i\varphi}|x\rangle & \textrm{if $\chi(x)=1$}\\
\hspace{0.45cm}|x\rangle & \textrm{if $\chi(x)=0,$}
\end{array} \right.
\end{displaymath}
while the operator $S_{0}(\phi)$ multiplies the amplitude by a factor of $e^{i\phi}$ iff the initial state is the zero state~$|0\rangle$.
\begin{lemma} \label{Generalized Amplitude Expansion}
\cite{CM02} Let $Q=Q(\mathcal {A},\phi,\varphi,\chi)$ be defined as above, then
\begin{align}\label{6}
Q|\Psi_{1}\rangle &=e^{i\varphi}((1-e^{i\phi})a-1)|\Psi_{1}\rangle +e^{i\varphi}(1-e^{i\phi})a|\Psi_{0}\rangle\\[2pt]
Q|\Psi_{0}\rangle &=(1-e^{i\phi})(1-a)|\Psi_{1}\rangle -((1-e^{i\phi})a+e^{i\phi})|\Psi_{0}\rangle,
\end{align}
where $a=\langle \Psi_{1}|\Psi_{1}\rangle $ .
\end{lemma}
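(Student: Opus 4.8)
The plan is to establish both identities by a direct, step-by-step computation, pushing each of $|\Psi_{1}\rangle$ and $|\Psi_{0}\rangle$ through the four factors of $Q=-\mathcal{A}S_{0}(\phi)\mathcal{A}^{-1}S_{\chi}(\varphi)$ in turn. The whole calculation rests on two structural facts: the decomposition $|\Psi\rangle=\mathcal{A}|0\rangle=|\Psi_{1}\rangle+|\Psi_{0}\rangle$ into orthogonal good and bad parts with $\langle\Psi_{1}|\Psi_{0}\rangle=0$, and the operator identity $S_{0}(\phi)=I+(e^{i\phi}-1)\,|0\rangle\langle 0|$, which confines the effect of $S_{0}(\phi)$ to the coefficient of $|0\rangle$.

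First I would record the two scalar overlaps that drive everything. Using $\langle 0|\mathcal{A}^{-1}=(\mathcal{A}|0\rangle)^{\dagger}=\langle\Psi|$ together with orthogonality, one has $\langle 0|\mathcal{A}^{-1}|\Psi_{1}\rangle=\langle\Psi|\Psi_{1}\rangle=\langle\Psi_{1}|\Psi_{1}\rangle=a$, and, since $|\Psi\rangle$ is a unit vector and hence $\langle\Psi_{0}|\Psi_{0}\rangle=1-a$, likewise $\langle 0|\mathcal{A}^{-1}|\Psi_{0}\rangle=\langle\Psi|\Psi_{0}\rangle=1-a$.

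Then I would trace $|\Psi_{1}\rangle$ through the operators. Since $|\Psi_{1}\rangle$ is a good state, $S_{\chi}(\varphi)|\Psi_{1}\rangle=e^{i\varphi}|\Psi_{1}\rangle$; applying $\mathcal{A}^{-1}$ gives $e^{i\varphi}\mathcal{A}^{-1}|\Psi_{1}\rangle$; then $S_{0}(\phi)$ adds the term $(e^{i\phi}-1)e^{i\varphi}a\,|0\rangle$ using the first overlap; and finally $-\mathcal{A}$ turns $\mathcal{A}^{-1}|\Psi_{1}\rangle$ back into $|\Psi_{1}\rangle$ and $|0\rangle$ into $|\Psi_{1}\rangle+|\Psi_{0}\rangle$. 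Collecting coefficients and simplifying $-1-(e^{i\phi}-1)a=(1-e^{i\phi})a-1$ reproduces the stated $Q|\Psi_{1}\rangle$. The computation for $|\Psi_{0}\rangle$ has the same shape: $S_{\chi}(\varphi)$ leaves the bad state unchanged, $S_{0}(\phi)$ inserts $(e^{i\phi}-1)(1-a)\,|0\rangle$ via the second overlap, and after $-\mathcal{A}$ one checks the bookkeeping identity $-1+(1-e^{i\phi})(1-a)=-\big((1-e^{i\phi})a+e^{i\phi}\big)$ to match the claimed $Q|\Psi_{0}\rangle$.

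There is no deep obstacle here; the content is entirely linear-algebraic. The only places demanding care are getting the two overlaps right, which is where orthogonality of the good and bad subspaces and the normalization of $|\Psi\rangle$ are used, and tracking the accumulated phases $e^{i\varphi}$, $e^{i\phi}$ and the overall minus sign correctly through all four factors, so that the final algebraic simplifications land exactly on the two forms asserted in the lemma.
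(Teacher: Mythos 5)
Your proof is correct. The paper itself gives no proof of this lemma---it is quoted verbatim from the amplitude-amplification paper \cite{CM02}---and your direct computation (writing $S_{0}(\phi)=I+(e^{i\phi}-1)|0\rangle\langle 0|$, using unitarity of $\mathcal{A}$ and orthogonality of the good/bad components to get the overlaps $\langle 0|\mathcal{A}^{-1}|\Psi_{1}\rangle=a$ and $\langle 0|\mathcal{A}^{-1}|\Psi_{0}\rangle=1-a$, then collecting coefficients) is exactly the standard derivation found in that reference, with all phases and signs landing correctly on the stated formulas.
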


\section{Testing identity of two unknown Boolean functions}\label{Sec4}

 Testing of the function isomorphism with classical or quantum algorithms has already been investigated in the literature \cite{CCC10,SIAM15}. In spite of the fact that the affine equivalence has been already studied in cryptography, there are only few results concerning  quantum algorithms for  affine equivalence testing. Proposition 1 establishes the relation between the affine equivalence and the isomorphism, and this may contribute to the further study of the  testing of the affine equivalence. Identity can be seen as a special case of isomorphism and affine equivalence. Motivated by the isomorphism testing, we consider the identity testing problem. In this section, we give a quantum algorithm for testing  Boolean functions identity. Moreover, our algorithm will be shown  to be optimal. Then we present a lower bound on classical algorithms for the same problem. We also show that such lower bound is \linebreak reachable.

\subsection{Quantum lower and upper bounds for  testing  Boolean functions identity}

\textbf{Problem 1} The Identity Testing Problem:
 Given two unknown Boolean functions $f,g$, and an $\varepsilon> 0$, the testing  problem is to determine whether $f$ is identical with  $g$ or is $\varepsilon$-far from $g$, under the promise that one of them holds.

\medskip
 In the black box model, one provides a  ``black-box access" to a function $f$, meaning that the quantum algorithm can apply some unitary transformation
$U_{f}:|x\rangle |y\rangle\rightarrow |x\rangle|y\oplus f(x)\rangle$ (or $|x\rangle$ to $(-1)^{f(x)}|x\rangle$, where $f$ is a Boolean function). We present now a quantum algorithm for the above testing problem: Algorithm \ref{AL-QIT}.\smallskip

\begin{algorithm}
\caption{Quantum Algorithm for Identity Testing}\label{AL-QIT}\smallskip
\KwIn{Black-boxes for $f$ and $g$}
\KwOut{ $f$ is identical with $g$ iff $z=0$}
$|\psi_{0}\rangle\leftarrow {|{0}\rangle^{\otimes n}\otimes|1\rangle}$\;
Apply ${I^{\otimes n}\otimes  H}$ to $|\psi_{0}\rangle$ and get
$|\psi_{1}\rangle=|{0}\rangle^{\otimes n}\otimes|-\rangle$\;
Apply ${H^{\otimes n}\otimes  I}$ to $|\psi_{1}\rangle$ and get
$|\psi_{2}\rangle=\sum_{x\in\{0,1\}^{n}}\frac{1}{\sqrt{2^{n}}}|x\rangle|-\rangle$\;
Apply ${U_{f}}$ to $|\psi_{2}\rangle$ and get $|\psi_{3}\rangle=\sum_{x\in\{0,1\}^{n}}\frac{{(-1)}^{f(x)}}{\sqrt{2^{n}}}|x\rangle|-\rangle$\;
Apply ${U_{g}}$ to $|\psi_{3}\rangle$ and get $|\psi_{4}\rangle=\sum_{x\in\{0,1\}^{n}}\frac{{(-1)}^{f(x)+g(x)}}{\sqrt{2^{n}}}|x\rangle|-\rangle$\;
Apply ${H^{\otimes n}\otimes  I}$ to $|\psi_{4}\rangle$ and get $|\psi_{5}\rangle= \sum_{z\in\{0,1\}^{n}}{W_{h}(z)}|z\rangle|-\rangle$, where $h=f+g$\;	
Apply $Q^{m}$ to $|\psi_{5}\rangle$ and get $|\psi_{final}\rangle$, where $Q=-\mathcal {A}S_{0}\mathcal{A}^{-1}S_{\chi}$, $m=O(\frac{1}{\sqrt{\varepsilon}})$, $\mathcal {A}|{0}\rangle^{\otimes n}|-\rangle=\sum_{z\in\{0,1\}^{n}}{W_{h}(z)}|z\rangle|-\rangle$, $\mathcal {A}=(H^{\otimes n}\otimes  I)({U_{g}})({U_{f}})(H^{\otimes n}\otimes  I)$ and $|z\rangle|-\rangle(z\neq0)$ are good states\;
Measurement the first $n$ qubits and get $z$\;
Return $z$\; \smallskip
\end{algorithm}

\begin{theorem}\label{upbound1}
Algorithm \ref{AL-QIT} solves Identity Testing Problem 1 with one-sided error $\varepsilon>0$ using $O(\frac{1}{\sqrt{\varepsilon}})$ queries. If $f$ is identical with $g$, Algorithm \ref{AL-QIT} outputs ``$f$ is identical with $g$" with certainty. If $f$ is $\varepsilon$-far from $g$, Algorithm  \ref{AL-QIT} outputs ``$f$ is $\varepsilon$-far from $g$" with the probability at least $\frac{3}{4}$.
\end{theorem}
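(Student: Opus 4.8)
The plan is to trace the register through Algorithm~\ref{AL-QIT}, read off the relevant amplitude, and then let the amplitude amplification supply the separation. By the time we reach $|\psi_5\rangle=\sum_{z}W_h(z)|z\rangle|-\rangle$ with $h=f+g$, the only amplitude that matters is the one on $z=\mathbf 0$. Using Definition~\ref{Walsh} and Definition~\ref{correlation} I would establish the key identity $W_h(\mathbf 0)=\frac{1}{2^{n}}\sum_{x}(-1)^{f(x)+g(x)}=C(f,g)=1-2\,\mbox{Dist}(f,g)$, the last equality because $C(f,g)$ is the number of agreements minus disagreements divided by $2^{n}$. Declaring every $|z\rangle|-\rangle$ with $z\neq\mathbf 0$ to be a good state, and using that $|\psi_5\rangle$ is a unit vector so $\sum_z|W_h(z)|^2=1$, the initial success probability of $\mathcal A$ is $a=\sum_{z\neq\mathbf 0}|W_h(z)|^2=1-W_h(\mathbf 0)^2=1-C(f,g)^2=4\,\mbox{Dist}(f,g)\bigl(1-\mbox{Dist}(f,g)\bigr)$. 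This one formula drives both directions of the argument.

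For completeness (the one-sided guarantee) I would note that if $f$ is identical with $g$ then $h\equiv 0$, so $W_h(\mathbf 0)=1$, every other Walsh coefficient vanishes, and hence $a=0$. The good subspace is then trivial, $|\Psi\rangle=|\Psi_0\rangle$, the operator $Q$ acts on it only by a global phase, and the final measurement returns $z=\mathbf 0$ with probability $1$. Thus the algorithm outputs ``identical'' with certainty, exactly as claimed.

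For soundness, suppose $f$ is $\varepsilon$-far from $g$, so $\mbox{Dist}(f,g)\geq\varepsilon$ and therefore $a=4\,\mbox{Dist}(f,g)(1-\mbox{Dist}(f,g))\geq 4\varepsilon(1-\varepsilon)$, giving $\theta_a=\arcsin\sqrt a=\Omega(\sqrt\varepsilon)$. Invoking the expansion $Q^{m}|\Psi\rangle=\frac{1}{\sqrt a}\sin((2m+1)\theta_a)|\Psi_1\rangle+\frac{1}{\sqrt{1-a}}\cos((2m+1)\theta_a)|\Psi_0\rangle$, the probability of measuring some $z\neq\mathbf 0$ after $m$ iterations is $\sin^2((2m+1)\theta_a)$, and with $m=O(1/\sqrt\varepsilon)$ chosen so that $(2m+1)\theta_a$ lands near $\pi/2$ this is at least $3/4$; in that event the output is ``$\varepsilon$-far''. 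The query bound then follows since $S_0$ and $S_\chi$ make no oracle calls while each $\mathcal A$ and $\mathcal A^{-1}$ spends one query on $f$ and one on $g$, so $Q^{m}$ uses $O(m)=O(1/\sqrt\varepsilon)$ queries.

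The delicate point, and the step I expect to need the most care, is precisely this soundness estimate: a single fixed $m$ cannot force $\sin^2((2m+1)\theta_a)\geq 3/4$ for every $\mbox{Dist}(f,g)\in[\varepsilon,1]$, since as the distance approaches $1/2$ the angle $\theta_a$ grows and $(2m+1)\theta_a$ overshoots $\pi/2$, making the good-state probability oscillate. The clean way around this is to use that the only feature distinguishing the two cases is whether $a=0$ or $a\geq 4\varepsilon(1-\varepsilon)$, and to run the amplification with a randomized, geometrically increasing iteration count (the Boyer--Brassard--H{\o}yer--Tapp exponential schedule) instead of one fixed $m$: whenever $a>0$ a good $z\neq\mathbf 0$ appears within $O(1/\sqrt a)=O(1/\sqrt\varepsilon)$ queries with probability at least $3/4$, while $a=0$ still yields $z=\mathbf 0$ deterministically. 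This retains both the $O(1/\sqrt\varepsilon)$ query bound and the perfect completeness, closing the gap left by the naive fixed-$m$ reading of Lemma~\ref{lm1}.
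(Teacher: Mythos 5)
Your amplitude computation, the identity $W_h(\mathbf{0})=1-2\,\mbox{Dist}(f,g)$ giving $a=1-W_h(\mathbf{0})^2=4D(1-D)$ with $D=\mbox{Dist}(f,g)$, and the completeness case are all correct, and the fixed-$m$ subtlety you flag is real (the paper itself glosses over it by plugging the instance-dependent, unknown $\theta_a$ into $m=\lfloor\pi/4\theta_a\rfloor$). But your soundness argument has a genuine gap: the dichotomy you rest the proof on --- ``either $a=0$ or $a\geq 4\varepsilon(1-\varepsilon)$'' --- is false. The map $D\mapsto 4D(1-D)$ is symmetric about $D=\tfrac12$ and vanishes at $D=1$ as well as at $D=0$, so being $\varepsilon$-far gives no lower bound on $a$. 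Concretely, take $g=f\oplus 1$: then $D=1\geq\varepsilon$, so $f$ is $\varepsilon$-far from $g$, yet $h\equiv 1$, every Walsh coefficient $W_h(z)$ with $z\neq\mathbf{0}$ vanishes, and $a=0$; your algorithm (under the BBHT schedule or any other iteration schedule) returns $z=\mathbf{0}$ with certainty and outputs ``identical'', which is wrong. More generally, for $D\in(1-\varepsilon,1]$ one has $a<4\varepsilon(1-\varepsilon)$ and possibly $a=o(\varepsilon)$ (e.g.\ $D=1-\varepsilon^2$ gives $a\approx 4\varepsilon^2$), so a schedule truncated at $O(1/\sqrt{\varepsilon})$ queries fails with high probability. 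No choice of amplification schedule can repair this, because the defect is in the amplitude of the good component of $\mathcal{A}|0\rangle$ itself, not in the iteration count.

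The paper closes exactly this hole with the step you dropped: it assumes ``without loss of generality $\varepsilon\leq\mbox{Dist}(f,g)\leq 1/15$, otherwise Problem 1 is easy to deal with classically.'' That is, all large distances (including $D$ near $1$) are first eliminated by $O(1)$ classical random samples, which find a disagreement with probability at least $\tfrac34$; the quantum routine is only responsible for the regime $D\in[\varepsilon,1/15]$, where indeed $a\geq\tfrac{56}{15}\varepsilon$ and moreover $a\leq\tfrac{56}{225}<\tfrac14$, the latter bound being how the paper extracts the constant $\tfrac34$ from the $\max(1-a,a)$ guarantee of Lemma \ref{lm1}. Your proof needs this (or an equivalent) preprocessing reduction. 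Once it is added, your BBHT-style schedule is a sound --- in fact more rigorous --- substitute for the paper's matched-$m$ invocation of Lemma \ref{lm1}, since it removes the dependence on the unknown $\theta_a$; but without the reduction, the proposal does not establish the theorem as stated.
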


\eject
\begin{proof}
 According to Definition \ref{Walsh}, we obtain
\begin{align}\label{3}
W_{h}(0)=&\sum_{x\in\{0,1\}^{n}}\frac{{(-1)}^{f(x)+g(x)}}{2^{n}}\\
  =&\frac{1}{2^{n}}\sum_{x:f(x)=g(x)}{{(-1)}^{f(x)+g(x)}}+\frac{1}{2^{n}}\sum_{x:f(x)\neq g(x)}{(-1)}^{f(x)+g(x)}\\
 =&\frac{1}{2^{n}}[2^{n}-2\cdot\mbox{Ham}(f,g)]\\
 =&1-2\cdot\mbox{Dist}(f,g).
\end{align}
 Without loss of generality, we assume $\varepsilon\leq \mbox{Dist}(f,g)\leq 1/15 $. Otherwise, Problem 1 is easy to deal with classically.

\medskip
Case 1: $f$ is identical with $g$. Then the final state $|\psi_{final}\rangle$ in Algorithm \ref{AL-QIT} is ${|{0}\rangle^{\otimes n}\otimes|-\rangle}$. Therefore, Algorithm \ref{AL-QIT} outputs ``$f$ is identical with $g$" with a certainty.

\medskip
Case 2: $f$ is $\varepsilon$-far from $g$, $0<W_{h}(0)=1-2\cdot \mbox{Dist}(f,g)\leq 1-2\varepsilon$. Then the final state $|\psi_{final}\rangle$ in Algorithm \ref{AL-QIT} is:
\begin{align}\label{2}
\sum_{z\in\{0,1\}^{n}}{W_{h}(z)}|z\rangle|-\rangle=&|\phi_{1}\rangle|-\rangle+|\phi_{0}\rangle|-\rangle, \mbox{where}\\
|\phi_{1}\rangle=&\sum_{z\neq 0}{W_{h}(z)}|z\rangle,\\
|\phi_{0}\rangle=&\sum_{z=0}{W_{h}(z)}|z\rangle,
\end{align} and therefore
\begin{align}\label{3}
a &=\sin^{2}\theta_{a}=\langle\phi_{1}\mid\phi_{1}\rangle\\
  &=\sum_{z\neq 0}{W^{2}_{h}(z)}=1-\sum_{z= 0}{W^{2}_{h}(z)}\\
 &\geq1-(1-2\varepsilon)^{2}=4\varepsilon-4\varepsilon^{2}\geq\frac{56}{15}\varepsilon.
\end{align}
 Therefore, we obtain $\theta_{a}\geq \arcsin\sqrt{\frac{56}{15}\varepsilon}$. To prepare the final state $|\psi_{final}\rangle$ in Algorithm \ref{AL-QIT}, we need to call black box two times, i.e., that is to use two queries. So it takes 4 queries (i.e., $\mathcal{A}$ uses two queries, $\mathcal{A}^{-1}$ uses two queries too) when Algorithm \ref{AL-QIT} runs $Q$ once time, where $Q=-\mathcal {A}S_{0}\mathcal{A}^{-1}S_{\chi}$. According to Lemma \ref{lm1}, Algorithm \ref{AL-QIT} runs $Q$ $m$ times, where $m=\lfloor {\pi/4\theta_{a}}\rfloor$. Finally, we know that the total number of queries of Algorithm \ref{AL-QIT} is $4m+2=O(\frac{1}{\sqrt{\varepsilon}})$. Hence the query complexity of Algorithm \ref{AL-QIT} is $O(\frac{1}{\sqrt{\varepsilon}})$. According to the hypothesis $\varepsilon\leq \mbox{Dist}(f,g)\leq 1/15 $, we have $(\frac{13}{15})^{2}\leq W^{2}_{h}(0)\leq(1-2\varepsilon)^{2}$, {\it i.e.}, $a\leq56/225$. Again, by Lemma \ref{lm1}, Algorithm \ref{AL-QIT} outputs ``$f$ is $\varepsilon$-far from $g$" with the probability $p\geq \rm max(1-56/225,56/225)>\frac{3}{4}$.
\end{proof}

\begin{remark}
 For an unknown function $f$, we can test $f$ using different types of $g$ so as to obtain an information concerning $f$. Also we can test two unknown functions whether they are identical or they are $\varepsilon$-far from each other. Besides, in the following quantum algorithms $|-\rangle$ also stands for $\frac{|0\rangle-|1\rangle}{\sqrt{2}}$.
\end{remark}
As the next we show that the quantum lower bound for Problem 1 is also $\Omega(\frac{1}{\sqrt{\varepsilon}})$. That means that Algorithm 1 is optimal.
\begin{theorem}\label{lbound1}
Any quantum query algorithm for Identity Testing Problem 1 requires $\Omega(\frac{1}{\sqrt{\varepsilon}})$ queries.
\end{theorem}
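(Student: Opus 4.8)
The plan is to prove the lower bound by the polynomial method, reducing Problem 1 to a symmetric partial Boolean function on $N=2^{n}$ bits and then invoking the degree lemmas already available above (Lemmas \ref{Lemma-degree-II}, \ref{Lemma-lower-bound1} and \ref{Lemma-lower-bound2}). First I would restrict attention to the hardest family of instances, which can only make the task easier and hence still yields a valid lower bound for the general problem. Fix $g$ to be the all-zeros function, so that a query to $g$ carries no information and Problem 1 specializes to distinguishing $f\equiv 0$ from the case $|\{x:f(x)=1\}|\geq\varepsilon 2^{n}$. Viewing the truth table $(f(x))_{x\in\{0,1\}^{n}}$ as an input string of length $N=2^{n}$, this is exactly the symmetric partial Boolean function $\mathcal F$ with domain $D=\{y\in\{0,1\}^{N}:|y|=0 \text{ or } |y|\geq\varepsilon N\}$, where $\mathcal F(y)=0$ when $|y|=0$ and $\mathcal F(y)=1$ otherwise, and where one query to $f$ is one query to a bit of $y$.

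By Lemma \ref{Lemma-lower-bound2} it then suffices to show $\widetilde{\text{deg}}(\mathcal F)=\Omega(1/\sqrt{\varepsilon})$. Since $\mathcal F$ is symmetric, Lemma \ref{Lemma-degree-II} together with the remark following it lets me pass from an optimal approximating polynomial to a single-variate real polynomial $\phi$ in the Hamming weight $t=|y|$ with $\text{deg}(\phi)=\widetilde{\text{deg}}(\mathcal F)$. This $\phi$ inherits two properties: from condition (2) of the approximate-degree definition, $0\leq\phi(i)\leq 1$ for every integer $0\leq i\leq N$; and from condition (1), which holds on the promise domain, $\phi(0)\leq 1/3$ while $\phi(i)\geq 2/3$ for every integer $i\geq\varepsilon N$.

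Now I would extract a large derivative and feed it into the Markov--Bernstein-type estimate. Since $\phi(0)\leq 1/3$ and $\phi(\lceil\varepsilon N\rceil)\geq 2/3$, the mean value theorem provides a point $x\in[0,\lceil\varepsilon N\rceil]\subseteq[0,N]$ with $|\phi'(x)|\geq \frac{1/3}{\lceil\varepsilon N\rceil}$. Applying Lemma \ref{Lemma-lower-bound1} with $b_{1}=0$, $b_{2}=1$ and $c=\frac{1}{3\lceil\varepsilon N\rceil}$ gives $\text{deg}(\phi)\geq\sqrt{cN/(c+b_{2}-b_{1})}$, where $cN=\frac{N}{3\lceil\varepsilon N\rceil}=\Theta(1/\varepsilon)$ and, because $\varepsilon N\geq 1$ forces $c\leq 1/3$, the denominator $c+b_{2}-b_{1}=\Theta(1)$. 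Hence $\text{deg}(\phi)=\Omega(1/\sqrt{\varepsilon})$, and combining with Lemma \ref{Lemma-lower-bound2} yields $\widetilde Q(\mathcal F)\geq\frac12\widetilde{\text{deg}}(\mathcal F)=\frac12\text{deg}(\phi)=\Omega(1/\sqrt{\varepsilon})$, which transfers to Problem 1.

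The step I expect to be most delicate is the parameter bookkeeping in the application of Lemma \ref{Lemma-lower-bound1}: one must verify that $\phi$ is bounded in $[0,1]$ at \emph{all} integer points and not merely on the promise domain $D$, that the witnessing point of large derivative genuinely lies in $[0,N]$, and that the factor $c+b_{2}-b_{1}$ remains $\Theta(1)$ so that the square root delivers the full $1/\sqrt{\varepsilon}$ rather than a weaker rate. Conceptually, the key move is the reduction itself, namely fixing $g=0$ and recasting the acceptance probability as a property of the symmetric string $y$, since this is precisely what makes the single-variate polynomial machinery of Lemma \ref{Lemma-degree-II} applicable.
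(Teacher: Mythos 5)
Your proposal is correct and follows essentially the same route as the paper's own proof: reduce to the symmetric partial Boolean function that is $0$ at Hamming weight $0$ and $1$ at weight $\geq \varepsilon N$, symmetrize to a univariate polynomial via Lemma \ref{Lemma-degree-II}, extract a large derivative by the mean value theorem, and combine Lemma \ref{Lemma-lower-bound1} with Lemma \ref{Lemma-lower-bound2}. The only difference is that you fix $g\equiv 0$ and work with the truth table of $f$, whereas the paper works with the XOR string $Z=X\oplus Y$ of the two oracles; these yield the identical partial function, and your version if anything makes the query-counting in the reduction slightly more explicit.
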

\begin{proof}
 Let $X=(x_{0},x_{1},\ldots,x_{N-1})$, $Y=(y_{0},y_{1},\ldots,y_{N-1})$. Let $f(i)=x_{i}$, $g(i)=y_{i}$, for $i=0,1,\cdots,N-1$, $x_{i},y_{i}\in \{0,1\}$ and $N=2^{n}$. Values of any $n$-bit Boolean function can now be expressed by an $N$-bit string. The task to solve Problem 1 can now be reduced to that of computing the Hamming weight of an $N$-bit string $Z=(x_{0}\oplus y_{0},x_{1}\oplus y_{1},\ldots,x_{N-1}\oplus y_{N-1})$.

\medskip
In order to do that, we consider any $N$-bit symmetric partial  Boolean function $H: \{D\subset\{0,1\}^{N}\}\to\{0,1\}$ defined as follows:
\begin{align}
	H(Z)= \begin{cases}
		   0 & \mbox{if~} |Z|=0, \\
           1 & \mbox{if~} |Z|\geq \varepsilon N, \\
           \mbox {undefined} & \mbox{otherwise.}
	\end{cases}
\end{align}
 Solving Problem 1 is now equivalent to computing symmetric partial Boolean function $H(Z)$. By Lemma \ref{Lemma-degree-II}, there exists an univariate polynomial $\phi(t)$ approximating $H$ with an error $1/3$ and $\widetilde{\text{deg}}(H)\!=\! deg(\phi)$. According to the Lagrange's mean value theorem on the interval $[0,\varepsilon N],\,$ we$\,$have
\begin{align}
	\phi(\varepsilon N)-\phi(0)=\phi'(\xi)(\varepsilon N),0< \xi<\varepsilon N.
\end{align}
Since the polynomial $\phi(t)$ approximates $H$ with the error $1/3$, we have
\begin{gather}
	0\leq\phi(0)\leq 1/3,\hspace{0.15cm} 2/3\leq \phi(\varepsilon N)\leq1,\\
        \phi(\varepsilon N)-\phi(0)\geq1/3>0.
\end{gather}
 Therefore, $\phi'(\xi)=c_{0}/\varepsilon N$, where $c_{0}=\phi(\varepsilon N)-\phi(0)$. In other words, $\phi(t)$ has the following properties:
\begin{enumerate}
\itemsep=0.9pt
\item  $0\le \phi(i)\le 1$ for any integer $0\leq i\leq N$.
\item  There exists $\xi$ such that $\mid \phi'(\xi)\mid\geq c_{0}/\varepsilon N$.
\end{enumerate}
  By Lemma \ref{Lemma-lower-bound1}, we obtain
  \begin{align}\label{1}
\widetilde{\text{deg}}(H)&=deg(\phi)\\
 &\geq \sqrt{(c_{0}N/\varepsilon N) /(c_{0}/\varepsilon N+1-0)}\\
 &= \sqrt{c_{0} /(c_{0}/ N+\varepsilon)}\\
 &=\Omega(\frac{1}{\sqrt{\varepsilon}}).
\end{align}
Finally, by Lemma \ref{Lemma-lower-bound2}, we get $\widetilde Q(H)=\Omega(\frac{1}{\sqrt{\varepsilon}})$.
\end{proof}

\vfil\eject

\begin{remark}
According to Theorem \ref{upbound1} and Theorem \ref{lbound1}, the quantum query complexity of Problem 1 is $\Theta(\frac{1}{\sqrt{\varepsilon}})$.
\end{remark}

\subsection{Classical lower and upper bounds  for testing Boolean functions identity}

It is now natural to ask whether our quantum algorithm has any advantage comparing  to its classical counterparts.In order to do that, we present in this subsection  a lower bound for randomized classical algorithms for testing Boolean functions identity.

In order to prove such a lower bound, we will use the following definition.
\begin{definition}
We say that inputs ${X_{i_{1}}},{X_{i_{2}}},\cdots,{X_{i_{k}}}$ are good for some functions $f$ and $g$ if there exists ${x_{i_{j}}}$ such that $f({X_{i_{j}}})\neq g({X_{i_{j}}})$, where ${X_{i_{l}}}\in\{0,1\}^{n}$ for $l=1,2,\ldots,k$. Otherwise, the inputs ${X_{i_{1}}},{X_{i_{2}}},\cdots,{X_{i_{k}}}$ will said to be bad. We will denote the event \\
\begin{align}\notag
B_{k}=\{X_{i_{1}}, {X_{i_{2}}},\cdots,{X_{i_{k}}}| \hspace{0.05cm}\mbox{inputs }\hspace{0.10cm} {X_{i_{j}}} \hspace{0.05cm}\mbox{are}\hspace{0.15cm} \mbox{bad},1\leq j\leq k\}.
\end{align}
\end{definition}

\begin{theorem}\label{lbound2}
Any classical randomized algorithm for Identity Testing Problem 1 requires $\Omega(\frac{1}{\varepsilon})$ queries.
\end{theorem}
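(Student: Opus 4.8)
The plan is to invoke Yao's minimax principle and to reduce Problem 1, exactly as in the proof of Theorem \ref{lbound1}, to deciding whether the string $Z=(f(X_0)\oplus g(X_0),\dots,f(X_{N-1})\oplus g(X_{N-1}))$ has $|Z|=0$ or $|Z|\ge\varepsilon N$, where $N=2^n$ and a single query reveals the pair $(f(X_i),g(X_i))$ at a chosen point $X_i$. By Yao's principle it suffices to exhibit one hard input distribution $\mu$ on which every \emph{deterministic} $k$-query algorithm errs with probability exceeding $1/3$ whenever $k$ is below the claimed bound. I would take $\mu$ to place weight $1/2$ on the ``identical'' instance $f=g\equiv 0$, and weight $1/2$ on a random ``far'' instance in which $g$ is the constant $0$ function and $f$ is the indicator of a set $S\subseteq\{0,1\}^n$ (so $f(x)=1$ exactly when $x\in S$) with $|S|=\varepsilon N$ chosen uniformly at random. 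Each far instance has $\mathrm{Dist}(f,g)=|S|/N=\varepsilon$, so both promise cases of Problem 1 are respected.

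The key structural observation is that a deterministic algorithm gleans information only through hitting a \emph{good} input, i.e. a point of $S$. As long as every point queried so far is \emph{bad} in the sense of the definition preceding the theorem, each answer is the pair $(0,0)$, which is precisely the transcript produced on the identical instance; hence the algorithm follows one fixed, input-independent sequence of queries $X_{i_1},X_{i_2},\dots$ and terminates with one fixed output. Consequently, on any far instance whose set $S$ avoids the first $k$ probed points — that is, on the event $B_k$ — the algorithm sees the same transcript as on $f=g\equiv0$ and returns the same answer, and therefore misclassifies one of the two cases.

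I would then bound $\Pr[B_k]$ directly. Since $S$ is a uniformly random $\varepsilon N$-subset and the first $k$ probed points form a fixed set of size $k$ along the all-bad path, sampling without replacement yields
\begin{align}\notag
\Pr[B_k]=\prod_{j=0}^{k-1}\frac{N-\varepsilon N-j}{N-j}\ge\Bigl(1-\frac{\varepsilon N}{N-k}\Bigr)^{k}.
\end{align}
For $k\le c/\varepsilon$ with a sufficiently small absolute constant $c>0$ (and $\varepsilon N\ge 1$, so that $k\le N/3$), each factor is at least $1-\tfrac32\varepsilon$, and the whole product stays bounded below by a constant exceeding $3/4$. On the far half of $\mu$ the algorithm is then wrong with probability at least $\Pr[B_k]$, so its total error under $\mu$ exceeds $\tfrac12\cdot\tfrac34>\tfrac13$, contradicting the bounded-error requirement. (If the fixed output along the all-bad path happens to be ``$\varepsilon$-far'', the algorithm already errs with probability $1/2$ on the identical half.) Hence $k=\Omega(1/\varepsilon)$.

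The hypergeometric estimate and the choice of constants are routine. The one place requiring care is the adaptivity argument: one must verify that the entire \emph{transcript}, not merely the XOR pattern, coincides between the identical instance and the far instances counted by $B_k$. This is exactly why I fix $g\equiv 0$, so that every bad query literally returns $(0,0)$ in both cases and the two executions are genuinely indistinguishable. I expect this indistinguishability-of-transcripts step, rather than any calculation, to be the main conceptual point.
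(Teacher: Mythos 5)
Your proof is correct, and its combinatorial core is exactly the paper's: you even reuse the paper's event $B_k$, and your hypergeometric product $\prod_{j=0}^{k-1}\frac{N-\varepsilon N-j}{N-j}$ is precisely the quantity the paper computes as $P(B_m)=(1-\varepsilon)\prod_{k=2}^{m}P(B_k\mid B_{k-1})$ and lower-bounds by $(1-\varepsilon)\bigl[1-\frac{\varepsilon m N}{N-(m-1)}\bigr]$. What is genuinely different is the framework wrapped around that calculation. The paper argues directly about a randomized algorithm making $m$ adaptive queries and ends informally: for $m=o(1/\varepsilon)$ one has $P(B_m)\approx 1$, ``therefore the algorithm cannot determine'' the answer with high probability; it never specifies the distribution over far instances under which its conditional probabilities are valid, does not explain why adaptivity is harmless, and stops at an approximation rather than a quantitative contradiction. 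You instead invoke Yao's minimax principle with an explicit hard distribution (half the identical instance $f=g\equiv 0$, half a far instance with $g\equiv 0$ and $f$ the indicator of a uniformly random $\varepsilon N$-set), and you make the transcript-indistinguishability point explicit: along the all-bad path a deterministic algorithm's query sequence and output are fixed, so conditioned on $B_k$ its behavior on a far instance coincides with its behavior on the identical instance, which converts $\Pr[B_k]\ge 3/4$ into a clean error bound $\ge\frac12\cdot\frac34>\frac13$ under $\mu$. Your route buys rigor --- it justifies the chain of conditional probabilities by fixing the random object to be the set $S$ rather than the queries, and it handles adaptive, randomized algorithms formally --- at the cost of a longer setup; the paper's version is the same counting argument presented more briefly and more loosely.
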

\begin{proof}
Let $\mathcal {A}$ be a randomized algorithm for Problem 1 and $N=2^{n}$. Let us also suppose that $\mathcal {A}$ makes some random adaptive queries for $m$ different inputs $X_{i_{1}},X_{i_{2}},\cdots,X_{i_{m}}$. If $f$ is $\varepsilon$-far from $g$, according to the conditional probability formula and the fact that $B_{m}\subseteq B_{m-1}\subseteq \cdots \subseteq B_{1}$, we get that
\begin{align}\label{1}
P(B_{m})=&P(B_{m}|B_{m-1})P(B_{m-1})\\
=&P(B_{m}|B_{m-1})P(B_{m-1}|B_{m-2})P(B_{m-2})\\
 =&P(B_{1})\Pi_{k=2}^{m}P(B_{k}|B_{k-1})\\
 =&(1-\varepsilon)\frac{(1-\varepsilon)N-1}{N-1}\frac{(1-\varepsilon)N-2}{N-2}\cdots\frac{(1-\varepsilon)N-(m-1)}{N-(m-1)}\\
 =&(1-\varepsilon)\left[1-\frac{\varepsilon N}{N-1}\right]\left[1-\frac{\varepsilon N}{N-2}\right]\cdots\left[1-\frac{\varepsilon N}{N-(m-1)}\right]\\
\geq&(1-\varepsilon)\left[1-\sum_{k=1}^{m-1}\frac{\varepsilon N}{N-k}\right]\\
\geq&(1-\varepsilon)\left[1-\frac{\varepsilon (m-1)N}{N-(m-1)}\right]\\
\geq&(1-\varepsilon)\left[1-\frac{\varepsilon mN}{N-(m-1)}\right].
\end{align}
If $m=o(\frac{1}{\varepsilon})$ and $\varepsilon$ is sufficiently small, then $(1-\varepsilon)\left[1-\frac{\varepsilon mN}{N-(m-1)}\right]\approx1$. That is $P(B_{m})\approx1$. Therefore, the algorithm cannot determine whether $f$ is identical with $g$ or is $\varepsilon$-far from $g$ with a high success probability.
\end{proof}

The above lower bound is tight. Indeed, we give now an algorithm to reach the lower bound.
\begin{algorithm}\small
\caption{Classical Randomized Algorithm for Identity Testing}\label{ClaIdTest}
\KwIn{Black-boxes for $f$ and $g$, parameter $\varepsilon>0$}
\KwOut{$f$ is $\varepsilon$-far from $g$ iff $f(x)\neq g(x)$ for some $x$ }
$r\leftarrow 0$\;
\While{$r\leq \frac{\ln3}{\varepsilon}$}{Take an element $x\in\{0,1\}^{n}$ uniformly at random\;
\If{$f(x)\neq g(x)$}{outputs ``$f$ is $\varepsilon$-far from $g$" and halts\;}
$r\leftarrow r+1$\;}
\end{algorithm}\vspace*{-2mm}

\begin{theorem}\label{upbound2}
 Algorithm \ref{ClaIdTest} solves Problem 1 with one-sided error using $O(\frac{1}{\varepsilon})$ queries. If $f$ is identical with  $g$, our Algorithm \ref{ClaIdTest} outputs ``$f$ is identical with $g$" with certainty. If $f$ is $\varepsilon$-far from $g$, Algorithm \ref{ClaIdTest} outputs ``$f$ is $\varepsilon$-far from $g$" with the probability at least $\frac{2}{3}$.
\end{theorem}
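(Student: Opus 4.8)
The plan is to analyze Algorithm \ref{ClaIdTest} in three separate parts, carried out in this order: (i) bound the number of queries, (ii) verify completeness in the identity case, and (iii) verify soundness in the $\varepsilon$-far case. The whole argument is a standard random-sampling bound, so none of the three parts is genuinely hard; the only point requiring care is calibrating the loop count $\ln 3/\varepsilon$ against the target failure probability $1/3$.

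First, for the query complexity, I would simply observe that the \textbf{while} loop executes at most $\lfloor \ln 3/\varepsilon\rfloor + 1$ times, and each iteration queries $f$ and $g$ on a single randomly chosen point $x$. Hence the algorithm uses at most $O(\frac{1}{\varepsilon})$ queries, whether or not it halts early. Next, for completeness, suppose $f$ is identical with $g$. Then $f(x)=g(x)$ for every $x\in\{0,1\}^{n}$, so the test $f(x)\neq g(x)$ inside the loop never triggers; the algorithm runs through all iterations without outputting ``$\varepsilon$-far'' and therefore reports ``$f$ is identical with $g$'' with certainty. This is exactly the one-sided error property claimed.

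The main work is the soundness step. Suppose $f$ is $\varepsilon$-far from $g$, i.e., $\mbox{Dist}(f,g)\geq\varepsilon$. Then a uniformly random $x$ satisfies $f(x)\neq g(x)$ with probability at least $\varepsilon$, equivalently $f(x)=g(x)$ with probability at most $1-\varepsilon$. The algorithm errs only if every one of its (at least $\ln 3/\varepsilon$) independent samples lands on a point of agreement, an event of probability at most $(1-\varepsilon)^{\ln 3/\varepsilon}$. Applying the elementary inequality $1-\varepsilon\leq e^{-\varepsilon}$, I would bound this by
\begin{align}
(1-\varepsilon)^{\ln 3/\varepsilon}\leq e^{-\varepsilon\cdot \ln 3/\varepsilon}=e^{-\ln 3}=\frac{1}{3}.
\end{align}
Thus the probability that the algorithm detects a disagreement and outputs ``$f$ is $\varepsilon$-far from $g$'' is at least $\frac{2}{3}$, as required.

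The step I expect to be the ``main'' one is this last calculation, though it is routine: the delicate point is only to match the loop threshold $\ln 3/\varepsilon$ to the failure target via $(1-\varepsilon)^{\ln 3/\varepsilon}\leq 1/3$, and to note that the off-by-one in the iteration count is harmless, since any extra samples can only decrease the failure probability. Combined with Theorem \ref{lbound2}, which gives the matching $\Omega(\frac{1}{\varepsilon})$ lower bound, this shows the bound is tight.
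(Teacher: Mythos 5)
Your proposal is correct and follows essentially the same route as the paper's own proof: one-sided certainty in the identical case, and in the $\varepsilon$-far case bounding the failure probability by $(1-\varepsilon)^{r}\leq e^{-\varepsilon r}<\frac{1}{3}$ for $r>\frac{\ln 3}{\varepsilon}$. Your version merely makes explicit the inequality $1-\varepsilon\leq e^{-\varepsilon}$ and the query-count bookkeeping, which the paper leaves implicit.
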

\begin{proof}
 Case 1: Let $f$ be identical with  $g$. Algorithm \ref{ClaIdTest} then clearly outputs ``$f$ is identical with $g$" with certainty.\medskip\\
Case 2: Let $f$ be $\varepsilon$-far from $g$. If Algorithm \ref{ClaIdTest} runs $r$ loops, Algorithm \ref{ClaIdTest}  reports $f$ is identical with  $g$ with an error probability $p_{r}\leq(1-\varepsilon)^{r}< \frac{1}{3}$. The above inequality then holds when we take $r>\frac{\ln3}{\varepsilon}$, \it i.e., $r=O(\frac{1}{\varepsilon})$.
\end{proof}

\begin{remark}
According to Theorem \ref{lbound2} and Theorem \ref{upbound2}, the randomized query complexity for our Problem 1 is $\Theta(\frac{1}{\varepsilon})$. Observe also that Batu {\it et al.} \cite{FOCS01} and Diakonikolas {\it et al.}~\cite{SODA15} studied the question of the identity testing for  all kinds of distributions. The query complexity of these algorithms is not only related to the metric distance parameter $\varepsilon$, but also to the distribution size.  The query complexity of our algorithm depends on the value of  $\varepsilon$ and not on input size of Boolean functions. The reasons for this need to be further discussed.
\end{remark}

\section{Testing  correlations between two unknown Boolean functions }\label{Sec5}

 Various properties and correlations of some and between symmetric Boolean functions have been already explored extensively including some asymptotic properties. For example, the Deutsch-Jozsa quantum algorithm solved Deutsch's problem using one query: it is the problem (as already disussed) to decide whether a given function  $f$ is constant or balanced under the promise that it has one of these two properties. This problem is equivalent to the problem  to distinguish $|x|=2^{n-1}$ from $|y|=0,{2^{n}}$, where $|y|$ denotes the Hamming weight of string $y$. Gruska, Qiu and Zheng \cite{MSSC2017} generalized the distributed Deutsch-Jozsa promise problem. Recently, Qiu and Zheng \cite{PRA062331} also generalized the D-J problem and gave an exact quantum query algorithm to distinguish such strings. Now, we consider the D-J problem from the perspective of functional correlation. In fact, it is a problem equivalent to distinguishing $|C(f,\textbf{0})|=0$ from $|C(f,\textbf{0})|=1$, where \textbf{0} denotes that $f(x)$ is  0 for all values of $x$.\footnote{For the sake of beauty, here we abuse the absolute value notation.} Based on the above facts, we generalize the idea of Deutsch-Jozsa quantum algorithm to distinguish the correlation $|C(f,g)|=\varepsilon$ form $|C(f,g)|=1$ between two unknown functions.

\subsection{Exact quantum query algorithms for testing correlation}

\textbf{Problem 2} Correlation Testing Problem:
 Given two Boolean functions $f,g$ as black boxes, the testing  problem is to determine whether $|C(f,g)|=\varepsilon$  or $|C(f,g)|=1$, where $0\leq\varepsilon\leq\frac{\sqrt{3}}{2}$, under the promise that one of them holds.\vspace*{-1mm}

\begin{algorithm}\small
\caption{Quantum Algorithm for Correlation Testing}\label{alg3}
\KwIn{Black-boxes for $f$ and $g$}
\KwOut{ $|C(f,g)|=1$ iff $z=0$ }
$|\psi_{0}\rangle\leftarrow {|{0}\rangle^{\otimes n}\otimes |1\rangle}$\;
Apply ${I^{\otimes n}\otimes  H}$ to $|\psi_{0}\rangle$ and get
$|\psi_{1}\rangle=|{0}\rangle^{\otimes n}\otimes|-\rangle$\;
Apply ${H^{\otimes n}\otimes  I}$ to $|\psi_{1}\rangle$ and get
$|\psi_{2}\rangle=\sum_{x\in\{0,1\}^{n}}\frac{1}{\sqrt{2^{n}}}|x\rangle|-\rangle$\;
Apply ${U_{f}}$ to $|\psi_{2}\rangle$ and get $|\psi_{3}\rangle=\sum_{x\in\{0,1\}^{n}}\frac{{(-1)}^{f(x)}}{\sqrt{2^{n}}}|x\rangle|-\rangle$\;
Apply ${U_{g}}$ to $|\psi_{3}\rangle$ and get $|\psi_{4}\rangle=\sum_{x\in\{0,1\}^{n}}\frac{{(-1)}^{f(x)+g(x)}}{\sqrt{2^{n}}}|x\rangle|-\rangle$\;
Apply ${H^{\otimes n}\otimes  I}$ to $|\psi_{4}\rangle$ and get $|\psi_{5}\rangle= \sum_{z\in\{0,1\}^{n}}{W_{h}(z)}|z\rangle|-\rangle=|\Psi\rangle|-\rangle=|\Psi_{1}\rangle|-\rangle+|\Psi_{0}\rangle|-\rangle$, where $|\Psi_{1}\rangle=\sum_{z\neq0}{W_{h}(z)}|z\rangle $, $|\Psi_{0}\rangle=\sum_{z= 0}{W_{h}(z)}|z\rangle$ and $h=f+g$\;	
Apply $Q$ to $|\psi_{5}\rangle$ and get $|\psi_{final}\rangle$, where $ Q=-\mathcal {A}S_{0}(\phi)\mathcal{A}^{-1}S_{z}(\varphi), \mathcal {A}|0\rangle^{\otimes n}|-\rangle=|\Psi\rangle|-\rangle$, $\phi=2\arcsin\frac{1}{2\sqrt{1-\varepsilon^{2}}}, \varphi=2\arcsin\frac{1}{2\sqrt{1-\varepsilon^{2}}}$, $\mathcal {A}=(H^{\otimes n}\otimes  I)({U_{g}})({U_{f}})(H^{\otimes n}\otimes  I)$ and $S_{z}(\varphi)$ is defined such that
 \begin{displaymath}
|z\rangle|-\rangle\rightarrow \left\{ \begin{array}{ll}
e^{i\varphi}|z\rangle|-\rangle & \textrm{if $z\neq0,$}\\
\hspace{0.42cm}|z\rangle|-\rangle & \textrm{if $z=0.$}
\end{array} \right.
\end{displaymath}
Measurement the first $n$ qubits and get $z$\;
Return $z$\;
\end{algorithm}\vspace*{-3mm}

\begin{theorem}Algorithm \ref{alg3} solves Problem 2 with certainty using 6 queries only.
\end{theorem}
\begin{proof}
Case 1: $|C(f,g)|=1$. We obtain $\mathcal {A}|0\rangle^{\otimes n}|-\rangle=|\Psi\rangle|-\rangle=|\Psi_{0}\rangle|-\rangle$. Hence
\begin{align}\label{2}\notag
Q|\Psi\rangle|-\rangle=&Q|\Psi_{0}\rangle|-\rangle\\ \notag
=&-\mathcal {A}S_{0}(\phi)\mathcal{A}^{-1}S_{z}(\varphi)|\Psi_{0}\rangle|-\rangle\\
=&-\mathcal {A}(I-(1-e^{i\phi})|0\rangle^{\otimes n} \langle 0|^{\otimes n}\otimes |-\rangle \langle -|)\mathcal{A}^{-1}I|\Psi_{0}\rangle|-\rangle\\ \notag
=&-|\Psi_{0}\rangle|-\rangle+(1-e^{i\phi})|\Psi_{0}\rangle|-\rangle\\ \notag
=&-e^{i\phi}|\Psi_{0}\rangle|-\rangle. \notag
\end{align}
Therefore, the result of measurement is $z=0$, and Algorithm \ref{alg3} outputs ``$|C(f)|=1$" with certainty.

\medskip
Case 2a: $|C(f,g)|=\varepsilon=0$. We obtain $\mathcal {A}|{0}\rangle^{\otimes n}|-\rangle=|\Psi\rangle|-\rangle=|\Psi_{1}\rangle|-\rangle$. Hence
\begin{align}\label{2} \notag
Q|\Psi\rangle|-\rangle=&Q|\Psi_{1}\rangle|-\rangle\\ \notag
=&-\mathcal {A}S_{0}(\phi)\mathcal{A}^{-1}S_{z}(\varphi)|\Psi_{1}\rangle|-\rangle\\
=&-\mathcal {A}(I-(1-e^{i\phi})|0\rangle^{\otimes n} \langle 0|^{\otimes n}\otimes |-\rangle \langle -|)\mathcal{A}^{-1}e^{i\varphi}|\Psi_{1}\rangle|-\rangle\\ \notag
=&e^{i\varphi}(-|\Psi_{1}\rangle+(1-e^{i\phi})|\Psi_{1}\rangle)|-\rangle\\ \notag
=&-e^{i(\varphi+\phi)}|\Psi_{1}\rangle|-\rangle. \notag
\end{align}
Therefore, the result of measurement is $z\neq0$, and Algorithm \ref{alg3} outputs ``$|C(f,g)|=\varepsilon$" with certainty.

\medskip
Case 2b: $|C(f,g)|=\varepsilon\in(0,\frac{\sqrt{3}}{2})$. This is equivalent to $|W_{h}(0)|=\varepsilon$, so $\sin^{2}(\theta_{a})=a=\langle\Psi_{1}|\Psi_{1}\rangle=1-\varepsilon^{2}$. Finally, we have $0<\pi/4\theta_{a}-\frac{1}{2}<1$ and $m=\lfloor {\pi/4\theta_{a}}-\frac{1}{2}\rfloor=0.$ Therefore, we only need to compute the parameters value $\varphi$ and $\phi$ using $Q$ once in order to achieve a precise distinction.

\medskip
Let us now choose $\varphi$ and $\phi$ such that
\begin{align}\label{2}
e^{i\varphi}(1-e^{i\phi})\sqrt{a}\sin(\theta_{a})=((1-e^{i\phi})a+e^{i\phi})\frac{1}{\sqrt{1-a}}\cos(\theta_{a}).
\end{align}
This equation is equivalent to
\begin{align}\label{2}
(1-\varepsilon^{2})(1-e^{i\phi})(e^{i\varphi}-1)=e^{i\phi}.
\end{align}
and by solving this equation we obtain $\phi=\varphi=2\arcsin\frac{1}{2\sqrt{1-\varepsilon^{2}}}$.
By Lemma \ref{Generalized Amplitude Expansion}, we get
\begin{align}\label{26}
|\Psi_{final}\rangle=Q|\Psi\rangle|-\rangle=\left[a(e^{i\varphi}-1)(1-e^{i\phi})-(e^{i(\phi)}+e^{i(\varphi)})+1\right]|\Psi_{1}\rangle|-\rangle.
\end{align}
Therefore, the result of the measurement provides $z\neq0$, i.e., Algorithm \ref{alg3} outputs ``$|C(f,g)|=\varepsilon$" with certainty.

\medskip
Case 2c: $|C(f,g)|=\varepsilon=\frac{\sqrt{3}}{2}$. This is equivalent to $|W_{h}(0)|=\frac{\sqrt{3}}{2}$, so $\sin^{2}(\theta_{a})=a=\langle\Psi_{1}|\Psi_{1}\rangle=\frac{1}{4}$. We have $\phi=\varphi=2\arcsin\frac{1}{2\sqrt{1-\varepsilon^{2}}}=\pi$.
By Lemma \ref{Generalized Amplitude Expansion}, the amplitude of state $|\Psi_{0}\rangle|-\rangle$ is
\begin{align}\label{6}
&e^{i\varphi}(1-e^{i\phi})a-((1-e^{i\phi})a+e^{i\phi})\\
&= -2a(1-e^{i\phi})-e^{i\phi}=0.
\end{align} That is, equation \ref{26} holds. Therefore, the result of the measurement provides $z\neq0$, i.e., Algorithm \ref{alg3} outputs ``$|C(f,g)|=\varepsilon$" with certainty.
\end{proof}
\begin{theorem}
For correlation testing Problem 2, the classical deterministic query complexity is $\Theta(N)$. More specifically, solving Problem 2 with zero error requires $(1-\varepsilon)N+2$ or $(1+\varepsilon)N+2$ queries.
\end{theorem}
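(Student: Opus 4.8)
The plan is to pass from the two black boxes to the single derived string and then run a direct decision-tree/adversary analysis. Writing $h=f+g$ and $N=2^{n}$, Definition \ref{correlation} gives $C(f,g)=\frac{1}{N}\sum_{x}(-1)^{h(x)}=\frac{a-b}{N}$, where $a$ and $b$ count the $x$ with $h(x)=0$ and $h(x)=1$ respectively and $a+b=N$. Hence $|C(f,g)|=1$ is exactly the statement that $h$ is constant ($b\in\{0,N\}$), while $|C(f,g)|=\varepsilon$ forces $b\in\{\frac{(1-\varepsilon)N}{2},\frac{(1+\varepsilon)N}{2}\}$. The first thing I would stress is the cost accounting: a deterministic algorithm only ever learns bits of $f$ and of $g$, and the value $h(x)$ that actually decides the problem is recovered only after querying both $f(x)$ and $g(x)$, so every newly determined value of $h$ costs exactly two queries. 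This is the source of the factor hidden in the bounds.

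For the upper bound I would analyse the obvious pointwise tester: repeatedly pick a fresh $x$, query $f(x)$ and $g(x)$ to obtain $h(x)$, and stop as soon as the transcript is decisive. Two stopping rules suffice. If both a $0$ and a $1$ have occurred among the observed $h(x)$, then $h$ is not constant and the algorithm safely outputs $|C(f,g)|=\varepsilon$; if one value (say $0$) has been seen on so many points that fewer than $\frac{(1-\varepsilon)N}{2}$ positions remain, then no $\varepsilon$-weight is consistent with the data and the algorithm outputs $|C(f,g)|=1$. A count shows the second rule fires once more than $\frac{(1+\varepsilon)N}{2}$ points have been examined, and a pigeonhole check shows the first rule fires no later; translating ``points'' into queries and adding the final confirming pair yields $(1+\varepsilon)N+2$ in the worst case, whereas instances in which the decisive witness sits in the larger part are settled after only $\frac{(1-\varepsilon)N}{2}+1$ points, i.e. $(1-\varepsilon)N+2$ queries.

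For the matching lower bound I would run an adaptive adversary that keeps both hypotheses alive. Against any deterministic algorithm the adversary answers every query so that $h=0$ on all probed points; this is simultaneously consistent with the constant all-zero function and with a weight-$\frac{(1-\varepsilon)N}{2}$ function, whose $\frac{(1-\varepsilon)N}{2}$ ones can all be hidden among the still-unprobed coordinates as long as at most $\frac{(1+\varepsilon)N}{2}$ points have been fixed; hence more than $\frac{(1+\varepsilon)N}{2}$ values of $h$, that is more than $(1+\varepsilon)N$ queries, are required. Pairing the same constant function instead with a weight-$\frac{(1+\varepsilon)N}{2}$ function, whose $\frac{(1-\varepsilon)N}{2}$ zeros are the ones being hidden, yields the companion bound $(1-\varepsilon)N+2$. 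Combining the two directions with the upper bound pins the query count to $(1-\varepsilon)N+2$ or $(1+\varepsilon)N+2$, and in particular to $\Theta(N)$.

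I expect the main obstacle to be bookkeeping rather than ideas: getting the exact thresholds, the ``$+2$'', and the correct sign in front of $\varepsilon$ right requires care, because the two $\varepsilon$-weights $\frac{(1-\varepsilon)N}{2}$ and $\frac{(1+\varepsilon)N}{2}$ play asymmetric roles in the two stopping rules and in the two adversary pairings, and one must verify that each determined value of $h$ genuinely costs two queries, so that no shortcut through a single black box can beat the bound.
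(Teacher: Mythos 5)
Your proposal is correct, and its skeleton is the same as the paper's: reduce to the derived string $h=f\oplus g$, observe that the promise pins the number of disagreements to $0$, $N$, or $\frac{(1\pm\varepsilon)N}{2}$, charge two queries per determined value of $h$, and note that the worst case is an all-identical sequence of answers whose length must pass the threshold $\frac{(1+\varepsilon)N}{2}$ before the $\varepsilon$-hypotheses die. Where you genuinely add something is the lower-bound direction: the paper only analyses the natural algorithm and lets the phrase ``we will need to make another query'' do double duty, whereas you make the adversary explicit (answer $h=0$ everywhere, keeping the constant function and a weight-$\frac{(1-\varepsilon)N}{2}$ function simultaneously alive up to $\frac{(1+\varepsilon)N}{2}$ probed points) and also explicitly justify the two-queries-per-point accounting via the observation that a lone $f(x)$ or $g(x)$ value carries no information about $h(x)$; both of these are needed for the word ``requires'' and are absent from the paper. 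The one substantive divergence is the sign of $\varepsilon$: you keep both possibilities $C=\pm\varepsilon$, under which an all-disagree transcript of length $\frac{(1-\varepsilon)N}{2}+1$ still leaves $C=-1$ and $C=-\varepsilon$ both consistent, so the worst case is uniformly $(1+\varepsilon)N+2$ and your $(1-\varepsilon)N+2$ appears only as an instance-specific count and a weaker companion lower bound. The paper's claim that the all-disagree branch terminates after $(1-\varepsilon)N+2$ queries is valid only under its footnoted abuse of notation in which $|C(f,g)|=\varepsilon$ is read as $C(f,g)=+\varepsilon$; on that reading the two branch counts $(1-\varepsilon)N+2$ and $(1+\varepsilon)N+2$ of the theorem both arise as genuine stopping points. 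So your treatment is the more careful one with respect to the promise as literally stated, at the cost of matching the theorem's ``either/or'' phrasing less exactly; either way the $\Theta(N)$ conclusion stands.
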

\begin{proof}
 If the first $(\frac{1-\varepsilon}{2})N$ input queries return $f(x_{i})\neq g(x_{i})$ or $(\frac{1+\varepsilon}{2})N$ input queries return $f(x_{i})=g(x_{i})$, then we will need to make another query as well. So the total number of queries is $(1-\varepsilon)N+2$  or $(1+\varepsilon)N+2$.
Therefore the theorem holds.
\end{proof}

\section{Testing balancedness of Boolean functions }

Balancedness is also one of the properties of Boolean functions important in cryptography, see for example \cite{IEEE16,DCC18}. Of interest are also functions that are not balanced, but $\varepsilon$-far balanced. To this end, we introduce the following concept. We study the balance problem from the perspective of function property testing.
In the following we study balancedness and $\varepsilon$-far balancedness of Boolean functions.
\begin{definition}
A Boolean function $f(x)$ is called $\varepsilon$-far balanced if $|C(f)|\geq\varepsilon$.
\end{definition}
\textbf{Problem 3} Balancedness Testing Problem:
 Given an unknown Boolean functions $f$, balancedness testing  problem is to determine whether $f$ is balanced or is $\varepsilon$-far balanced, under the promise that one of them holds.
\begin{algorithm}\small
\caption{Quantum Algorithm for Balancedness Testing}\label{alg4}
\KwIn{Black-boxes for $f$ and parameter $\varepsilon>0$}
\KwOut{ $f$ is balanced iff $z\neq0$ }
$|\psi_{0}\rangle\leftarrow {|{0}\rangle^{\otimes n}\otimes|1\rangle}$\;
Apply ${I^{\otimes n}\otimes  H}$ to $|\psi_{0}\rangle$ and get
$|\psi_{1}\rangle=|{0}\rangle^{\otimes n}\otimes|-\rangle$\;
Apply ${H^{\otimes n}\otimes  I}$ to $|\psi_{1}\rangle$ and get
$|\psi_{2}\rangle=\sum_{x\in\{0,1\}^{n}}\frac{1}{\sqrt{2^{n}}}|x\rangle|-\rangle$\;
Apply ${U_{f}}$ to $|\psi_{2}\rangle$ and get $|\psi_{3}\rangle=\sum_{x\in\{0,1\}^{n}}\frac{{(-1)}^{f(x)}}{\sqrt{2^{n}}}|x\rangle|-\rangle$\;
Apply ${H^{\otimes n}\otimes  I}$ to $|\psi_{3}\rangle$ and get $|\psi_{4}\rangle= \sum_{z\in\{0,1\}^{n}}{W_{f}(z)}|z\rangle|-\rangle=|\Psi\rangle|-\rangle=|\Psi_{1}\rangle|-\rangle+|\Psi_{0}\rangle|-\rangle$, where $|\Psi_{1}\rangle=\sum_{z=0}{W_{f}(z)}|z\rangle $, $|\Psi_{0}\rangle=\sum_{z\neq 0}{W_{f}(z)}|z\rangle$ \;	
Apply $Q^{m}$ to $|\psi_{4}\rangle$ and get $|\psi_{final}\rangle$, where $Q=-\mathcal {A}S_{0}\mathcal{A}^{-1}S_{\chi}$, $m=O(\frac{1}{\varepsilon})$, $\mathcal {A}|{0}\rangle^{\otimes n}|-\rangle=|\Psi\rangle|-\rangle$, $\mathcal {A}=(H^{\otimes n}\otimes  I)({U_{f}})(H^{\otimes n}\otimes  I)$  and $|0\rangle|-\rangle$ is the only good state\;
Measurement the first $n$ qubits and get $z$\;
Return $z$\;
\end{algorithm}

\begin{theorem}
 Algorithm \ref{alg4} solves Balancedness Testing Problem 3 with one-sided error using $O(\frac{1}{\varepsilon})$ queries. If $f$ is balanced, Algorithm \ref{alg4} outputs ``$f$ is balanced" with certainty. If $f$ is $\varepsilon$-far balanced, Algorithm \ref{alg4} outputs that ``$f$ is $\varepsilon$-far balanced" with probability at least $\frac{3}{4}$.
\end{theorem}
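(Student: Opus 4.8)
The plan is to follow the template of the proof of Theorem~\ref{upbound1}: read off the amplitude of the good state produced by $\mathcal{A}$, and then apply the quadratic speedup of Lemma~\ref{lm1}. After the initial $\mathcal{A}=(H^{\otimes n}\otimes I)(U_f)(H^{\otimes n}\otimes I)$ the register sits in $|\psi_4\rangle=\sum_{z}W_f(z)|z\rangle|-\rangle=|\Psi_1\rangle|-\rangle+|\Psi_0\rangle|-\rangle$, where the good component is $|\Psi_1\rangle=W_f(0)|0\rangle$ (the single good basis state $|0\rangle|-\rangle$). Using the remark that $C(f)=W_f(0)$, the initial success probability is $a=\langle\Psi_1|\Psi_1\rangle=W_f^2(0)=|C(f)|^2$. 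This identity is the hinge: balancedness is precisely $a=0$, whereas $\varepsilon$-far balancedness, i.e.\ $|C(f)|\geq\varepsilon$, forces $a\geq\varepsilon^2>0$.

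First I would settle the balanced case, which is the source of the one-sided (certain) error. When $C(f)=0$ we have $a=0$ and $|\Psi_1\rangle=0$, so $\mathcal{A}|0\rangle^{\otimes n}|-\rangle=|\Psi_0\rangle|-\rangle$ lies entirely in the bad ($z\neq0$) subspace. Substituting $a=0$ into the recurrence $Q|\Psi_0\rangle=2(1-a)|\Psi_1\rangle+(1-2a)|\Psi_0\rangle$ gives $Q|\Psi_0\rangle=|\Psi_0\rangle$; equivalently, $S_\chi$ acts trivially on a state carrying no $z=0$ component, $\mathcal{A}^{-1}$ sends it back to $|0\rangle^{\otimes n}|-\rangle$, $S_0$ flips its sign, and the leading $-\mathcal{A}$ restores $|\Psi_0\rangle|-\rangle$. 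Hence $Q^m|\psi_4\rangle=|\Psi_0\rangle|-\rangle$ for \emph{every} $m$, the measurement returns $z\neq0$ with certainty, and the algorithm reports ``balanced''.

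Next I would handle the $\varepsilon$-far case with Lemma~\ref{lm1}. Here $a=|C(f)|^2\geq\varepsilon^2>0$ and $\theta_a=\arcsin|C(f)|\geq\arcsin\varepsilon\geq\varepsilon$, so with $m=\lfloor\pi/4\theta_a\rfloor$ the lemma guarantees a good outcome ($z=0$, reported as ``$\varepsilon$-far balanced'') with probability at least $\max(1-a,a)$. For the query count, each of $\mathcal{A}$ and $\mathcal{A}^{-1}$ makes one query to $U_f$, so $Q$ uses two queries and the total is $2m+1\leq\pi/(2\varepsilon)+1=O(1/\varepsilon)$, using $\theta_a\geq\varepsilon$. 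To push the probability past $\frac{3}{4}$ I would argue exactly as in Theorem~\ref{upbound1} that one may assume without loss of generality $\varepsilon\leq|C(f)|\leq\frac{1}{2}$ (if $|C(f)|$ is a constant bounded away from $0$ the problem is solved by a handful of classical samples), whence $a\leq\frac{1}{4}$ and $\max(1-a,a)=1-a\geq\frac{3}{4}$.

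The step I expect to be the main obstacle is this final probability bound, together with the fact that the optimal iteration count $m=\lfloor\pi/4\theta_a\rfloor$ depends on the \emph{unknown} angle $\theta_a$. The promise supplies only the lower bound $\theta_a\geq\arcsin\varepsilon$, which suffices to cap the queries at $O(1/\varepsilon)$; but to secure the $\frac{3}{4}$ guarantee one must exclude the intermediate window $|C(f)|\in(\frac{1}{2},\frac{\sqrt{3}}{2})$, where $\max(1-a,a)$ can fall to $\frac{1}{2}$. The reduction to $|C(f)|\leq\frac{1}{2}$ is what makes the clean estimate $1-a\geq\frac{3}{4}$ available and closes the argument.
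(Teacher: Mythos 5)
Your proposal follows the paper's own proof essentially step for step: the same identification $a=\langle\Psi_{1}|\Psi_{1}\rangle=W_{f}^{2}(0)=C^{2}(f)$, the same computation showing that $Q$ fixes $|\Psi_{0}\rangle|-\rangle$ when $a=0$ (hence certainty in the balanced case), the same appeal to Lemma~\ref{lm1} with $\theta_{a}\geq\arcsin\varepsilon$, and the same query count $2m+1=O(1/\varepsilon)$ (the paper momentarily asserts ``4 queries'' per run of $Q$, a leftover from Theorem~\ref{upbound1} where two oracles are involved, but then writes $2m+1$ exactly as you do, so your count is the consistent one).

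Where you genuinely diverge is the final $\frac{3}{4}$ bound, and there you are more careful than the paper. The paper says ``without loss of generality $\varepsilon\leq 1/2$'' and then substitutes $a=1/4$ into $\max(1-a,a)$; but $\varepsilon\leq 1/2$ only yields $a\geq\varepsilon^{2}$, not $a\leq\frac{1}{4}$, so the paper's proof silently skips precisely the window $|C(f)|\in(\frac{1}{2},\frac{\sqrt{3}}{2})$ that you identify, where $\max(1-a,a)$ can drop to $\frac{1}{2}$. Your explicit hypothesis $\varepsilon\leq|C(f)|\leq\frac{1}{2}$ is the assumption the paper actually uses. However, your justification of that hypothesis has a flaw: dispatching large $|C(f)|$ to ``a handful of classical samples'' introduces two-sided error, since on a balanced $f$ a constant number of random samples can produce a large empirical correlation with positive probability, which would destroy the ``with certainty'' clause of the theorem. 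A repair that preserves one-sidedness is available: $\max(1-a,a)\geq\frac{1}{2}$ holds for every $a$, so running the quantum procedure twice independently and reporting ``$\varepsilon$-far balanced'' iff some run returns $z=0$ detects the far case with probability at least $\frac{3}{4}$, never errs on balanced inputs, and stays within $O(1/\varepsilon)$ queries. (The deeper issue you flag --- that $m=\lfloor\pi/4\theta_{a}\rfloor$ in Lemma~\ref{lm1} is tuned to the \emph{unknown} angle $\theta_{a}$ --- afflicts the paper's proof equally; a fully rigorous treatment would invoke the unknown-amplitude version of amplitude amplification.)
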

\begin{proof}
Case 1: $f$ is balanced, i.e., $|C(f)|=0$. We obtain $\mathcal {A}|{0}\rangle^{\otimes n}|-\rangle=|\Psi\rangle|-\rangle=|\Psi_{0}\rangle|-\rangle$. Hence
\begin{align}\label{2} \notag
Q|\Psi\rangle|-\rangle=&Q|\Psi_{0}\rangle|-\rangle\\ \notag
=&-\mathcal {A}S_{0}\mathcal{A}^{-1}S_{\chi}|\Psi_{0}\rangle|-\rangle\\
=&-\mathcal {A}(I-2|0\rangle^{\otimes n} \langle 0|^{\otimes n}\otimes |-\rangle \langle -|)\mathcal{A}^{-1}I|\Psi_{0}\rangle|-\rangle\\ \notag
=&-|\Psi_{0}\rangle|-\rangle+2|\Psi_{0}\rangle|-\rangle\\ \notag
=&|\Psi_{0}\rangle|-\rangle=|\Psi\rangle|-\rangle. \notag
\end{align}
Therefore, Algorithm \ref{alg4} outputs ``$f$ is balanced" with certainty.

\medskip
Case 2: $f$ is $\varepsilon$-far balanced, i.e., $|C(f)|\geq\varepsilon$, $|W_{f}(0)|=|C(f)|\geq\varepsilon$. In this case the final state $|\psi_{final}\rangle$ in Algorithm \ref{alg4} is:
\begin{align}\label{2}
\sum_{z\in\{0,1\}^{n}}{W_{f}(z)}|z\rangle|-\rangle=&|\Psi_{1}\rangle|-\rangle+|\Psi_{0}\rangle|-\rangle ,\\
|\Psi_{1}\rangle=&\sum_{z= 0}{W_{f}(z)}|z\rangle,\\
|\Psi_{0}\rangle=&\sum_{z\neq0}{W_{f}(z)}|z\rangle,
\end{align} and therefore
\begin{align}\label{3}
a &=\rm sin^{2}\theta_{a}=\langle\Psi_{1}\mid\Psi_{1}\rangle\\ \notag
  &=C^{2}(f)\geq\varepsilon^{2}.
\end{align}
 Hence, $\theta_{a}\geq \arcsin{\varepsilon}$.

\medskip
In order to prepare the final state $|\psi_{final}\rangle$ in Algorithm \ref{alg4}, we need to call the function black-box two times, i.e., to use two queries. So it takes in total 4 queries (indeed, $\mathcal{A}$ uses two queries, $\mathcal{A}^{-1}$ uses  two queries also) when Algorithm \ref{alg4} runs $Q$ once, where $Q=-\mathcal {A}S_{0}\mathcal{A}^{-1}S_{\chi}$. According to Lemma \ref{lm1}, Algorithm \ref{alg4} runs $Q$  $m=\lfloor {\pi/4\theta_{a}}\rfloor$ times. Since the total number of queries of Algorithm \ref{alg4} is $2m+1=O(\frac{1}{\varepsilon})$, i.e., the query complexity of Algorithm \ref{alg4} is $O(\frac{1}{\varepsilon})$. Without loss of generality, let's now assume $\varepsilon\leq 1/2 $. In such a case we get, using Lemma \ref{lm1}, Algorithm \ref{alg4} outputs  ``$f$ is $\varepsilon$-far balanced" with the probability $p\geq \rm max(1-1/4,1/4)>\frac{3}{4}$.
\end{proof}

Finally, we prove that our quantum algorithm for Problem 4 is optimal.\\
For $X=(x_{1},\ldots,x_{n})$$\in\{0,1\}^{n}$. Let $l,l'$ be integers such that $0\leq l,l'\leq n$. Let us now define the partial Boolean function $f_{l,l'}$ on$\{0,1\}^{n}$, as
\begin{align}
	f_{l,l'}(X)= \begin{cases}
		   1 & \mbox{if~} |X|=l, \\
           0 & \mbox{if~} |X|= l'\\
           \mbox {undefined}  & \mbox{otherwise.}
	\end{cases}
\end{align}
In addition, let $m\in\{l,l'\}$ be such that $|\frac{n}{2}-m|$is maximized, and let $\Delta_{l}=|l-l'|$.
\begin{lemma}(\cite{arXiv98})\label{lower bound3}
A lower bound for quantum query complexity of computing the partial function $f_{l,l'}$, given the input as an oracle, is $\Omega\left(\sqrt{n/\Delta_{l}}+\sqrt{m(n-m)}/\Delta_{l}\right)$.
\end{lemma}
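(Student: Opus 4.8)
The plan is to derive this bound by the polynomial method, in the same spirit as the proof of Theorem~\ref{lbound1}. First I would symmetrize. Since $f_{l,l'}$ is a symmetric partial Boolean function over $\{0,1\}^n$, any bounded-error quantum algorithm solving it with $T=\widetilde Q(f_{l,l'})$ queries yields, via the remark after Lemma~\ref{Lemma-degree-II} together with Lemma~\ref{Lemma-lower-bound2}, a single-variate real polynomial $\phi(t)$ in the Hamming weight $t=|X|$, of degree $\deg(\phi)=\widetilde{\text{deg}}(f_{l,l'})\le 2T$, that satisfies $0\le \phi(t)\le 1$ for all integers $0\le t\le n$ while $\phi(l)\ge 2/3$ and $\phi(l')\le 1/3$ (after possibly swapping the roles of $l$ and $l'$). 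It therefore suffices to show that every such polynomial has degree $\Omega(\sqrt{n/\Delta_l}+\sqrt{m(n-m)}/\Delta_l)$.

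For the first summand I would argue exactly as in Theorem~\ref{lbound1}. As $\phi$ varies by at least $1/3$ between the integer points $l'$ and $l$, which lie at distance $\Delta_l=|l-l'|$, the Lagrange mean value theorem furnishes a real $\xi$ strictly between them with $|\phi'(\xi)|\ge 1/(3\Delta_l)$. Applying Lemma~\ref{Lemma-lower-bound1} with $b_1=0$, $b_2=1$ and $c=1/(3\Delta_l)$ gives $\deg(\phi)\ge \sqrt{n/(1+3\Delta_l)}=\Omega(\sqrt{n/\Delta_l})$, hence $\widetilde Q(f_{l,l'})=\Omega(\sqrt{n/\Delta_l})$.

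The second summand $\sqrt{m(n-m)}/\Delta_l$ is the crux, and this is where the distinguished endpoint $m$, the one of $l,l'$ farthest from the center $n/2$, plays its role. Lemma~\ref{Lemma-lower-bound1} is blind to the location of the jump, but near an endpoint of $[0,n]$ a polynomial of a given degree can sustain a far larger derivative, so more degree is needed to realize the slope $1/(3\Delta_l)$ when $\xi\approx m$ sits close to $0$ or $n$. The tool I would invoke is a Bernstein--Markov inequality: rescaling $[0,n]$ to $[-1,1]$ by $x=(2t-n)/n$, a degree-$d$ polynomial bounded by a constant on the interval obeys $|\phi'(t)|=O(d/\sqrt{t(n-t)})$; evaluating at $\xi\approx m$ and comparing with $|\phi'(\xi)|\ge 1/(3\Delta_l)$ forces $d=\Omega(\sqrt{m(n-m)}/\Delta_l)$.

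The real obstacle is that Bernstein's inequality requires $\phi$ to be bounded on the entire continuous interval, whereas symmetrization only controls $\phi$ at the integer points $0,1,\ldots,n$. Bridging this discrete-to-continuous gap is the technical heart of the argument, and for it I would appeal to an Ehlich--Zeller/Coppersmith--Rivlin estimate, which guarantees that a polynomial bounded by $1$ at these $n+1$ equally spaced points, and whose degree does not exceed a $\Theta(\sqrt{n})$ threshold, stays bounded by an absolute constant on all of $[0,n]$, exactly the hypothesis Bernstein needs. A careful combination of the two polynomial bounds, handling separately the degree regimes in which Ehlich--Zeller does and does not apply, then yields $\widetilde Q(f_{l,l'})=\Omega(\sqrt{n/\Delta_l}+\sqrt{m(n-m)}/\Delta_l)$. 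This is precisely the bound established in \cite{arXiv98}, whose argument I would follow.
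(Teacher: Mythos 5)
The paper offers no proof of this lemma at all: it is imported verbatim from Nayak--Wu \cite{arXiv98}, so there is no internal argument to compare against. Your sketch reconstructs exactly the approach of that cited source --- symmetrization to a univariate polynomial $\phi$, the mean value theorem plus the Ehlich--Zeller bound (the paper's Lemma~\ref{Lemma-lower-bound1}) for the $\sqrt{n/\Delta_l}$ term, and Bernstein combined with a Coppersmith--Rivlin/Ehlich--Zeller discrete-to-continuous estimate for the $\sqrt{m(n-m)}/\Delta_l$ term --- and the outline is sound, with the genuinely technical regime analysis deferred to \cite{arXiv98}, which is no less than the paper itself does. One small correction to your intuitive aside: near the endpoints of $[0,n]$ Bernstein permits a \emph{larger} derivative for a fixed degree, so \emph{less} degree is needed to realize the slope $1/(3\Delta_l)$ there; the endpoint $m$ farthest from $n/2$ enters not because the bound is strongest there but because it is weakest --- for any $\xi$ between $l$ and $l'$ one has $\xi(n-\xi)\geq m(n-m)$, so $\sqrt{m(n-m)}/\Delta_l$ is the certifiable lower bound regardless of where $\xi$ actually falls.
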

\begin{theorem}\label{qlower bound-bt}
Any quantum query algorithm for Balancedness Testing Problem 3 requires $\Omega(\frac{1}{\varepsilon})$ queries.
\end{theorem}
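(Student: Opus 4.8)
The plan is to reduce Problem 3 to the task of computing a symmetric partial Boolean function on $N=2^{n}$ bits and then to invoke the weight-distinguishing lower bound of Lemma \ref{lower bound3}. First I would identify the unknown function $f$ with the $N$-bit string $Z=(f(x))_{x\in\{0,1\}^{n}}$ and write $k=|Z|$ for its Hamming weight. Since $C(f)=\frac{1}{2^{n}}\sum_{x}(-1)^{f(x)}=1-2k/N$, the function $f$ is balanced exactly when $k=N/2$, and $f$ is $\varepsilon$-far balanced, i.e. $|C(f)|\geq\varepsilon$, exactly when $|k-N/2|\geq \varepsilon N/2$. Thus Problem 3 is the problem of deciding, for the hidden string $Z$, whether $|Z|=N/2$ or $\bigl||Z|-N/2\bigr|\geq \varepsilon N/2$, a problem that depends only on the Hamming weight of $Z$.

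Next I would restrict attention to the hardest two-weight instance of this decision problem and apply Lemma \ref{lower bound3} with $n$ replaced by $N$. Concretely, I take the two weights $l=N/2$ and $l'=(1+\varepsilon)N/2$ (rounding to nearby integers if necessary, and assuming $\varepsilon$ is bounded away from $1$ so that both lie in $[0,N]$). The weight $l$ corresponds to a balanced string, while $l'$ gives $C(f)=-\varepsilon$, hence a legitimately $\varepsilon$-far balanced string. Any quantum algorithm that solves Problem 3 must in particular distinguish these two weights, i.e. compute $f_{l,l'}$ on $\{0,1\}^{N}$; restricting the promised inputs can only make the task easier, so a lower bound on $\widetilde Q(f_{l,l'})$ is a lower bound for Problem 3.

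It then remains to evaluate the bound of Lemma \ref{lower bound3} for these parameters. Here $\Delta_{l}=|l-l'|=\varepsilon N/2$, and since $|N/2-l|=0$ while $|N/2-l'|=\varepsilon N/2$, the maximizer is $m=l'=(1+\varepsilon)N/2$. I would then compute the two terms
\begin{align}\notag
\sqrt{N/\Delta_{l}}&=\sqrt{2/\varepsilon}=\Theta\!\left(\tfrac{1}{\sqrt{\varepsilon}}\right),\\\notag
\frac{\sqrt{m(N-m)}}{\Delta_{l}}&=\frac{\tfrac{N}{2}\sqrt{1-\varepsilon^{2}}}{\varepsilon N/2}=\frac{\sqrt{1-\varepsilon^{2}}}{\varepsilon}=\Theta\!\left(\tfrac{1}{\varepsilon}\right),
\end{align}
using $m(N-m)=\frac{(1+\varepsilon)N}{2}\cdot\frac{(1-\varepsilon)N}{2}=\frac{N^{2}(1-\varepsilon^{2})}{4}$. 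The second term dominates, so Lemma \ref{lower bound3} yields $\widetilde Q(f_{l,l'})=\Omega(1/\varepsilon)$, and therefore any quantum query algorithm for Problem 3 requires $\Omega(1/\varepsilon)$ queries.

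The steps are individually routine; the only real decision is the choice of the two weights. The main obstacle is to pick $l'$ so that the term $\sqrt{m(N-m)}/\Delta_{l}$ (which scales like $1/\varepsilon$) dominates the term $\sqrt{N/\Delta_{l}}$ (which scales only like $1/\sqrt{\varepsilon}$); placing $l'$ exactly on the boundary $(1+\varepsilon)N/2$ of the $\varepsilon$-far region achieves this while keeping $l'$ a valid $\varepsilon$-far instance. The remaining care is bookkeeping: ensuring $N/2$ and $(1+\varepsilon)N/2$ may be taken integral and that $\varepsilon$ is bounded away from $1$ so that $\sqrt{1-\varepsilon^{2}}=\Theta(1)$.
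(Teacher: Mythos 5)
Your proposal is correct and follows essentially the same route as the paper: both reduce Problem 3 to distinguishing Hamming weight $N/2$ from a boundary weight $(1\pm\varepsilon)N/2$ of the hidden $N$-bit string and then apply the Nayak--Wu bound (Lemma \ref{lower bound3}), with the term $\sqrt{m(N-m)}/\Delta_{l}=\Theta(1/\varepsilon)$ dominating. The only cosmetic difference is that you use the weight pair $\bigl(N/2,(1+\varepsilon)N/2\bigr)$ while the paper's explicit computation uses $\bigl(N/2,(1-\varepsilon)N/2\bigr)$ and notes the other case is symmetric.
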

\begin{proof}
 Let $Y=(y_{0},y_{1},\ldots,y_{N-1})$ where $N=2^n$, for $i=0,1,\cdots, N-1$, $y_{i}\in \{0,1\}$ and  $f(i) = y_i$. The value of any $n$-bit Boolean function can be expressed by an $N$-bit string $\,Y$. The task$\,$ to solve the$\,$~Balancedenes$\,$ problem$\,$  can$\,$ now be reduced$\,$ to the problem  to distinguish the$\,$ Hamming$\,$ weight $|Y|=\frac {N}{2}\,$ from $\,|Y|\leq\frac {(1-\varepsilon)N}{2 }\,$ or $\,|Y|\geq\frac {(1+\varepsilon)N}{2 }\,$. In view of this, let us consider$\,$ two arbitrary $\,N$-bit

 \eject

 \noindent   symmetric  partial  Boolean functions $H_{1},H_{2}: \{D\subset\{0,1\}^{N}\}\to\{0,1\}$ (for a subset $D$), which are defined as follows:
\begin{align}
	H_{1}(Y)= \begin{cases}
		   1& \mbox{if~} |Y|=\frac {N}{2}, \\
           0 & \mbox{if~} |Y|=\frac {(1-\varepsilon)N}{2 }\hspace{0.05cm}\\
           \mbox {undefined}  & \mbox{otherwise.}
	\end{cases}
  \\
	H_{2}(Y)= \begin{cases}
		   1& \mbox{if~} |Y|=\frac {N}{2}, \\
           0 & \mbox{if~} |Y|=\frac {(1+\varepsilon)}{2}N, \\
           \mbox {undefined}  & \mbox{otherwise.}
	\end{cases}
\end{align}
We consider here only the case of calculating $H_{1}(Y)$. The other case is similar. Let $l'=\frac {(1-\varepsilon)}{2}N$, $l=\frac {N}{2}$. We take $m=\frac {(1-\varepsilon)}{2}N$ such that $|\frac{N}{2}-m|$ is maximized, and let $\Delta_{l}=|l-l'|=\frac {\varepsilon}{2}N$. Therefore we have
\begin{align}
&\Omega\left(\sqrt{N/\Delta_{l}}+\sqrt{m(N-m)}/\Delta_{l}\right)\\[2pt]
&=\Omega\left(\sqrt{N/\frac {\varepsilon}{2}N}+\sqrt{\frac {1-\varepsilon}{2}N(N-\frac {1-\varepsilon}{2}N)}/\frac {\varepsilon}{2}N\right)\\[2pt]
&=\Omega\left(\sqrt{2/\varepsilon}+\sqrt{1-\varepsilon^{2}}/\varepsilon\right)\\
&=\Omega\left(1/\varepsilon\right).
\end{align}
According to Lemma \ref{lower bound3}, the query complexity of computing  $H_{1}(Y)$( or $H_{2}(Y)$) is $\Omega(1/\varepsilon)$. Therefore, the theorem holds.
\end{proof}
Therefore, the quantum query complexity of Balancedness Testing Problem 3 is $\Theta(\frac{1}{\varepsilon})$. Next, let's consider the classical algorithm for this problem. We first give the de Moivre-Laplace Central Limit Theorem as follows:
\begin{lemma}(\cite{Miller2017})\label{Central Limit Theorem}(de Moivre-Laplace Central Limit Theorem).
If $X$ is a random variable having the binomial with the parameters $n$ and $p$, the limiting form  of the distribution function of the standardized random variable
\begin{displaymath}
Z=\frac {X-np}{\sqrt{np(1-p)}}
\end{displaymath}as $n\rightarrow\infty$, is given by the standard normal distribution
\begin{displaymath}
F(z)=\int_{-\infty}^{z}\frac {1}{\sqrt{2\pi}}e^{-\frac {t^{2}}{2}}dt, -\infty<z<\infty.
\end{displaymath}
\end{lemma}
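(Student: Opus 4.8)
The plan is to prove this as the classical Central Limit Theorem specialized to the binomial law, using the method of characteristic functions together with L\'evy's continuity theorem. First I would exploit the fact that a binomial variable with parameters $n$ and $p$ splits into a sum of independent, identically distributed Bernoulli trials: write $X=\sum_{i=1}^{n}X_i$ where $P(X_i=1)=p$ and $P(X_i=0)=1-p$, so that $E[X_i]=p$ and $\mathrm{Var}(X_i)=p(1-p)$. Setting $Y_i=(X_i-p)/\sqrt{p(1-p)}$ produces centered, unit-variance summands, and the standardized variable of the statement becomes
\begin{displaymath}
Z=\frac{X-np}{\sqrt{np(1-p)}}=\frac{1}{\sqrt{n}}\sum_{i=1}^{n}Y_i .
\end{displaymath}

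Next I would compute the characteristic function $\varphi_Y(t)=E[e^{itY_1}]$ and Taylor-expand it near the origin. Since $E[Y_1]=0$ and $E[Y_1^2]=1$, the second-order expansion gives $\varphi_Y(t)=1-t^2/2+o(t^2)$ as $t\to0$. By independence the characteristic function of $Z$ factorizes, so for every fixed $t$,
\begin{displaymath}
\varphi_Z(t)=\Big(\varphi_Y\!\big(t/\sqrt{n}\big)\Big)^{n}=\Big(1-\frac{t^2}{2n}+o\!\big(\tfrac{1}{n}\big)\Big)^{n}\longrightarrow e^{-t^2/2}\quad(n\to\infty).
\end{displaymath}
Because $e^{-t^2/2}$ is exactly the characteristic function of the standard normal law, L\'evy's continuity theorem yields convergence of $Z$ in distribution to a standard normal variable. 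Since the limiting distribution function $F(z)=\int_{-\infty}^{z}\frac{1}{\sqrt{2\pi}}e^{-t^2/2}\,dt$ is continuous at every point, convergence in distribution upgrades to pointwise convergence of the distribution functions at all $z$, which is precisely the assertion of the lemma.

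The hard part will be the rigorous control of the error term in the limit $\big(1-\tfrac{t^2}{2n}+o(\tfrac1n)\big)^n\to e^{-t^2/2}$: one must check that the remainder coming from the Taylor expansion evaluated at the shrinking argument $t/\sqrt{n}$ is uniform enough that raising to the $n$-th power preserves the limit, and one must invoke L\'evy's continuity theorem, itself a nontrivial ingredient. An alternative, more self-contained route (the classical de Moivre--Laplace argument) avoids the continuity theorem by first proving the \emph{local} statement directly: apply Stirling's formula to $\binom{n}{k}p^{k}(1-p)^{n-k}$ at $k=np+z\sqrt{np(1-p)}$ to show the binomial mass function is asymptotic to the normal density, then pass from this local form to the integral (distribution-function) form by a Riemann-sum argument. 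In that route the main obstacle instead becomes establishing the Stirling estimate uniformly over the relevant window of $k$ and rigorously justifying the sum-to-integral passage.
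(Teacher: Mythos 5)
Your proposal is a correct proof outline, but note that the paper itself offers no proof of this lemma at all: it is imported verbatim from the cited statistics textbook (Johnson, Miller, Freund) and used as a black box in the analysis of the classical balancedness-testing algorithm (Theorem 9), so there is no internal argument to compare against. Your characteristic-function route (Bernoulli decomposition, second-order expansion $\varphi_Y(t)=1-t^2/2+o(t^2)$, factorization, L\'evy continuity, then pointwise CDF convergence from continuity of the normal law) is the standard modern proof of this special case of the CLT, and every step you indicate is sound; the only care needed is the one you flag, namely that for fixed $t$ the remainder $o(t^2)$ evaluated at $t/\sqrt{n}$ is $o(1/n)$, so $\bigl(1-\tfrac{t^2}{2n}+o(\tfrac{1}{n})\bigr)^n\to e^{-t^2/2}$ (handled cleanly via the principal branch of the logarithm, which is legitimate since the base tends to $1$). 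The alternative Stirling-formula argument you sketch is the historical de Moivre--Laplace proof and is the more self-contained of the two, at the cost of a uniform local estimate and a sum-to-integral passage. Either write-up would be an acceptable substitute for the citation; the characteristic-function version is shorter if one is allowed to quote L\'evy's theorem, while the Stirling version is preferable if the goal is a proof from first principles.
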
Using the equation $P(a<Z\leq b)=F(b)-F(a)$, the de Moivre-Laplace Central Limit Theorem has the following equivalent forms
\begin{align}\label{Central Limit Theorem-eq}
\lim_{n\rightarrow\infty}P[a\leq \frac { X-np}{\sqrt{np(1-p)}}\leq b]=\int_{a}^{b}\frac {1}{\sqrt{2\pi}}e^{-\frac {t^{2}}{2}}dt.
\end{align}
\begin{algorithm}\small
\caption{Classical Randomized Algorithm for Balancedness Testing}\label{alg5}
\KwIn{Black-boxes for $f$ and parameter $\varepsilon>0$}
\KwOut{ $f$ is balanced iff $|C'(f)|\leq0.5\varepsilon$ }
$T\leftarrow O(1/{\varepsilon}^{2})$\;
Take $T$ elements $\{x_{1},x_{2},\cdots,x_{T}\}$ uniformly at random, where $x_{i}\in\{0,1\}^{n}$\;
Compute $|C'(f)|=|\Sigma_{i=1}^{T}(-1)^{f(x_{i})}|$\;
Return $|C'(f)|$\;
\end{algorithm}\vspace*{-1mm}

\begin{theorem}\label{Cubound-bt}
 Algorithm \ref{alg5} solves Balancedness Testing Problem 3 using $O( 1/{\varepsilon^{2}})$ queries. If $f$ is balanced, Algorithm \ref{alg5} outputs ``$f$ is balanced" with the probability at least $\frac{2}{3}$; If $f$ is $\varepsilon$-far balanced, Algorithm \ref{alg5} outputs ``$f$ is $\varepsilon$-far balanced" with the probability at least $\frac{2}{3}$.
\end{theorem}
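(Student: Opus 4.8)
The plan is to view the statistic $C'(f)=\sum_{i=1}^{T}(-1)^{f(x_i)}$ as a rescaled binomial count and then apply the de Moivre--Laplace Central Limit Theorem (Lemma~\ref{Central Limit Theorem}) separately in the two promised cases. First I would let $X$ be the number of sampled points $x_i$ with $f(x_i)=1$. Since the $x_i$ are drawn uniformly and independently, $X$ is binomial with parameters $T$ and $p=\Pr[f(x)=1]=\tfrac{1-C(f)}{2}$, and a direct computation gives $\sum_{i=1}^{T}(-1)^{f(x_i)}=T-2X$. Hence the empirical correlation $\hat C:=\tfrac{1}{T}C'(f)=1-\tfrac{2X}{T}$ satisfies $E[\hat C]=1-2p=C(f)$. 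I would work with this normalized estimator, so the decision rule $|C'(f)|\le \tfrac{\varepsilon}{2}$ reads $\bigl|1-\tfrac{2X}{T}\bigr|\le \tfrac{\varepsilon}{2}$, i.e. $\bigl|X-\tfrac{T}{2}\bigr|\le \tfrac{\varepsilon T}{4}$.

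For Case~1 ($f$ balanced, $C(f)=0$) we have $p=\tfrac12$, so $E[X]=\tfrac{T}{2}$ and $\sqrt{Tp(1-p)}=\tfrac{\sqrt T}{2}$. Standardizing through $Z=(X-\tfrac{T}{2})/(\tfrac{\sqrt T}{2})$ turns the correctness event $\bigl|X-\tfrac T2\bigr|\le \tfrac{\varepsilon T}{4}$ into $|Z|\le \tfrac{\varepsilon}{2}\sqrt{T}$. Setting $T=c/\varepsilon^{2}$ this becomes $|Z|\le \tfrac12\sqrt{c}$, and by the equivalent form~\eqref{Central Limit Theorem-eq} its probability tends to $\int_{-\sqrt c/2}^{\sqrt c/2}\tfrac{1}{\sqrt{2\pi}}e^{-t^2/2}\,dt$, which exceeds $\tfrac23$ once $c\approx 4$. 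This yields the claimed success probability when $f$ is balanced.

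For Case~2 ($f$ is $\varepsilon$-far balanced, $|C(f)|\ge \varepsilon$) I would assume $C(f)\ge \varepsilon$, the case $C(f)\le -\varepsilon$ being symmetric after replacing $\hat C$ by $-\hat C$. Then $p=\tfrac{1-C(f)}{2}\le \tfrac{1-\varepsilon}{2}$, so $E[X]\le T\bigl(\tfrac12-\tfrac{\varepsilon}{2}\bigr)$. It suffices for correctness that $\hat C>\tfrac{\varepsilon}{2}$, i.e. $X<T\bigl(\tfrac12-\tfrac{\varepsilon}{4}\bigr)$, a threshold lying at least $\tfrac{\varepsilon T}{4}$ above $E[X]$. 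Using $p(1-p)\le\tfrac14$ to bound the standard deviation by $\tfrac{\sqrt T}{2}$ and standardizing, the failure probability is at most $\Pr\bigl[Z\ge \tfrac{\varepsilon}{2}\sqrt{T}\bigr]=\Pr\bigl[Z\ge\tfrac12\sqrt c\bigr]$, which by \eqref{Central Limit Theorem-eq} is below $\tfrac13$ for the same $c$ (in fact a smaller $c$ already works here). Thus the algorithm rejects with probability at least $\tfrac23$; since each loop iteration uses one query the total cost is $T=O(1/\varepsilon^{2})$, which gives the stated query bound.

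The main point requiring care is that Lemma~\ref{Central Limit Theorem} is asymptotic, so the Gaussian probabilities above are only limiting values; to make the bound fully rigorous for finite $T$ one must either absorb the Berry--Esseen correction into the constant $c$ or replace the CLT step by a finite-sample tail inequality (Chebyshev or Hoeffding), each of which affects only the hidden constant in $T=O(1/\varepsilon^{2})$ and not the stated order. A secondary bookkeeping issue is to fix a single constant $c$ that makes both cases succeed at once; the two-sided balanced case is the binding one and determines the constant.
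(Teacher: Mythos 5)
Your proposal is correct and takes essentially the same route as the paper's own proof: both treat the count of sampled points with $f(x_i)=1$ as a binomial variable, standardize it, invoke the de Moivre--Laplace theorem (Lemma~\ref{Central Limit Theorem}) to get concentration within $O(\varepsilon)$ of the mean using $T=O(1/\varepsilon^{2})$ samples, and conclude via the threshold-at-$\tfrac{\varepsilon}{2}$ decision rule. Your variations --- handling the far-balanced case one-sidedly via $p\le\tfrac{1-\varepsilon}{2}$ and $p(1-p)\le\tfrac14$ instead of the paper's explicit subcases $c=\tfrac{1\pm\varepsilon}{2}$ with accuracy $\delta=0.1\varepsilon$, and flagging that the CLT is only asymptotic (Berry--Esseen or Hoeffding fixes the constant) --- are minor refinements of the same argument, the latter being a point the paper itself glosses over.
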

\begin{proof}
Let $A_f=\{x:f(x)=1\}$. Let us now consider an experiment in which we will choose $T$ times independently and randomly elements $x_{i}\in \{0,1\}^{n}$.
 Let $\xi$ be the number of times that those chosen elements occur in $A_f$ in T trials, i.e., $\xi=|\{x_{i}:f(x_{i})=1\}|$. Then $\xi$ is a random variable. Moreover, we have
$\xi\sim b(T,c)$, where $P(A_f)=c$  and $b(T,c)$ denotes binomial distribution. The distribution law for $\xi$ is as follows:
$P\{\xi=k\}=\left(
                                             \begin{array}{c}
                                               T \\
                                               k \\
                                             \end{array}
                                           \right)
$$c^{k}(1-c)^{T-k}, k=0,1,\cdots,T$. So the probability is
\begin{align}\label{1}
&P\{(c-\delta)T\leq\xi\leq(c+\delta)T\}=\sum_{k=(c-\delta)T}^{(c+\delta)T}\left(
                                             \begin{array}{c}
                                               T \\
                                               k \\
                                             \end{array}
                                           \right)
c^{k}(1-c)^{T-k},\end{align} where $\delta$ is an accuracy parameter.

\medskip
It is difficult to find the exact value for the probability we need. We use The de Moivre-Laplace Central Limit Theorem to find its approximation.
Let\begin{align}\label{1}
&P\left\{(c-\delta)T\leq\xi\leq(c+\delta)T\right\}\\
=&P\left\{\frac{(c-\delta)T-Tc}{\sqrt{Tc(1-c)}}\leq \frac{\xi-Tc}{\sqrt{Tc(1-c)}} \leq\frac{(c+\delta)T-Tc}{\sqrt{Tc(1-c)}}\right\}\\
\approx&\int_{\frac{(c-\delta)T-Tc}{\sqrt{Tc(1-c)}}}^{\frac{(c+\delta)T-Tc}{\sqrt{Tc(1-c)}}}\frac{1}{\sqrt{2\pi}}e^{-\frac{t^{2}}{2}}dt\hspace{0.1cm} ( \mbox{Using}\hspace{0.1cm} \mbox{Equation} \hspace{0.1cm}\ref{Central Limit Theorem-eq}\hspace{0.1cm}\mbox{approximation})\\
=&\int_{\frac{-\delta T}{\sqrt{Tc(1-c)}}}^{\frac{\delta T}{\sqrt{Tc(1-c)}}}\frac{1}{\sqrt{2\pi}}e^{-\frac{t^{2}}{2}}dt\\
=&F\left({\frac{\delta T}{\sqrt{Tc(1-c)}}}\right)-F\left({\frac{-\delta T}{\sqrt{Tc(1-c)}}}\right)\\
=&2F\left({\frac{\delta T}{\sqrt{Tc(1-c)}}}\right)-1\geq\frac{2}{3}.
\end{align} By checking the value of the distribution function, we have ${\frac{\delta T}{\sqrt{Tc(1-c)}}}\geq 1$, i.e. $T=\Omega( 1/{\delta^{2}})$.

\medskip
Case 1: $f$ is balanced, i.e., $|C(f)|=0$. That is $c=0.5$. If the random variable $\xi$ satisfies the inequality $(c-\delta)T\leq\xi\leq(c+\delta)T$, that is $0.5-\delta\leq\frac{\xi}{T}\leq0.5+\delta$, we know that $-2\delta\leq1-\frac{2\xi}{T}\leq 2\delta$. Hence,
\begin{align}\label{2}
|C'(f)|=&\frac{1}{T}\left|\sum_{x_{i}}{(-1)}^{f(x_{i})}\right|\\
       =&\frac{1}{T}\left|\sum_{x_{i}:f(x_{i})=1}{(-1)}^{f(x_{i})}+\sum_{x_{i}:f(x_{i})=0}{(-1)}^{f(x_{i})}\right|\\
       =&\frac{1}{T}\left|-\xi+T-\xi\right|=\left|1-\frac{2\xi}{T}\right|\leq2\delta.
\end{align}
If we now take $\delta=0.1\varepsilon$, then $P\left\{(0.5-0.1\varepsilon)T\leq\xi\leq(0.5+0.1\varepsilon)T\right\}\geq\frac{2}{3}$ using $T=O( 1/{\delta^{2}})=O( 1/{\varepsilon^{2}})$ queries. In other words we get that $|C'(f)|\leq0.2\varepsilon$ with the probability at least $\frac{2}{3}$. Naturally, $|C'(f)|\leq0.5\varepsilon$ with the probability at least $\frac{2}{3}$. That is Algorithm \ref{alg5} outputs $|C(f)|=0$
with the probability at least $\frac{2}{3}$ using $T=O( 1/{\delta^{2}})=O( 1/{\varepsilon^{2}})$ queries.

\medskip
Case 2: $f$ is $\varepsilon$-far balanced, i.e., $|C(f)|\geq\varepsilon$. Let's consider the case of $|C(f)|=\varepsilon$.  That is $c=\frac{1\pm\varepsilon}{2}$. In order to discuss it we need to consider two cases.

\medskip
Case 2a: $c=\frac{1+\varepsilon}{2}$. If the random variable $\xi$ satisfies the inequality $(c-\delta)T\leq\xi\leq(c+\delta)T$, that is $\frac{1+\varepsilon}{2}-\delta\leq\frac{\xi}{T}\leq\frac{1+\varepsilon}{2}+\delta$. Hence, we have $-\varepsilon-2\delta\leq1-\frac{2\xi}{T}\leq-\varepsilon+ 2\delta$. If we now take $\delta=0.1\varepsilon$, then $P\left\{(\frac{1+\varepsilon}{2}-0.1\varepsilon)T\leq\xi\leq(\frac{1+\varepsilon}{2}+0.1\varepsilon)T\right\}\geq\frac{2}{3}$ using $T=O( 1/{\delta^{2}})=O( 1/{\varepsilon^{2}})$ queries. In other words, we can draw a conclusion that $|C'(f)|=|1-\frac{2\xi}{T}|\geq0.8\varepsilon>0.5\varepsilon$ with the probability at least $\frac{2}{3}$ using $O( 1/{\varepsilon^{2}})$ queries.
That means that Algorithm \ref{alg5} outputs $|C(f)|\geq\varepsilon$ with probability at least $\frac{2}{3}$ using $O( 1/{\varepsilon^{2}})$ queries.

\medskip
Case 2b: $c=\frac{1-\varepsilon}{2}$. If the random variable $\xi$ satisfies the inequality $(c-\delta)T\leq\xi\leq(c+\delta)T$, that is  $\frac{1-\varepsilon}{2}-\delta\leq\frac{\xi}{T}\leq\frac{1-\varepsilon}{2}+\delta$. Hence, we know that $\varepsilon-2\delta\leq1-\frac{2\xi}{T}\leq \varepsilon+ 2\delta$. If we now take $\delta=0.1\varepsilon$, then $P\left\{(\frac{1-\varepsilon}{2}-0.1\varepsilon)T\leq\xi\leq(\frac{1-\varepsilon}{2}+0.1\varepsilon)T\right\}\geq\frac{2}{3}$ using $T=O( 1/{\delta^{2}})=O( 1/{\varepsilon^{2}})$ queries. In other words, we get that $|C'(f)|=|1-\frac{2\xi}{T}|\geq0.8\varepsilon>0.5\varepsilon$ with the probability at least $\frac{2}{3}$ using $O( 1/{\varepsilon^{2}})$ queries.
That is Algorithm \ref{alg5} outputs in any case $|C(f)|\geq\varepsilon$ with the probability at least $\frac{2}{3}$ using $O( 1/{\varepsilon^{2}})$ queries.
\end{proof}

Next, we prove optimality of our Algorithm \ref{alg5}.
\begin{lemma}\label{iid}(\cite{Chernoff10})
Let $X_{1},X_{2},\cdots,X_{m}$ be i.i.d random variables taking 0 or 1, and $\mbox {Pr}[X_{i}=1]=p$.

 If $p\leq\frac{1}{4}$, then for any $t\geq0$
 \begin{align}
 \mbox {Pr}\left[(\sum_{i=1}^{m}X_{i}-\mu)>t\right]\geq\frac{1}{4}\mbox{exp}(\frac{-2t^{2}}{\mu}).
 \end{align}

 If $p\leq\frac{1}{2}$, then for any $0\leq t\leq m(1-2p)$
 \begin{align}
 \mbox {Pr}\left[(\sum_{i=1}^{m}X_{i}-\mu)>t\right]\geq\frac{1}{4}\mbox{exp}(\frac{-2t^{2}}{\mu}),
 \end{align} where $\mu=E[\sum_{i=1}^{m}X_{i}]=mp$.

\end{lemma}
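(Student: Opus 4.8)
The plan is to read this as a \emph{reverse} (anti-concentration) Chernoff bound and to deduce it from two ingredients: a non-asymptotic comparison of the binomial tail with the Gaussian tail, and an explicit elementary lower bound on the Gaussian tail. Write $S_m=\sum_{i=1}^m X_i$, so $S_m$ is $\mathrm{Bin}(m,p)$ with mean $\mu=mp$ and variance $\sigma^2=mp(1-p)=\mu(1-p)$. First I would observe that the two hypotheses of the lemma are \emph{exactly} the two regimes of Slud's classical inequality: for $X\sim\mathrm{Bin}(m,p)$ and any integer $k\ge mp$ one has $\Pr[X\ge k]\ge\Pr[Z\ge (k-mp)/\sigma]$ provided either $p\le\frac{1}{4}$ (with no upper restriction on $k$, matching ``any $t\ge0$''), or $mp\le k\le m(1-p)$ (which, with $k=\mu+t$, is precisely $0\le t\le m(1-2p)$). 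Applying this at the integer threshold just above $\mu+t$ (the rounding perturbs only lower-order factors) reduces the claim to a lower bound on $\Pr[Z\ge x]$ with $x=t/\sigma=t/\sqrt{\mu(1-p)}$.

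The second ingredient is the elementary inequality
\[
\Pr[Z\ge x]\ge\frac{1}{4}\,e^{-x^{2}}\qquad\mbox{for all } x\ge 0,
\]
where $Z$ is a standard normal variable. Granting this, the proof finishes in one line: since $p\le\frac{1}{2}$ in both cases we have $1-p\ge\frac{1}{2}$, hence $x^{2}=t^{2}/(\mu(1-p))\le 2t^{2}/\mu$, so
\[
\Pr[S_m-\mu>t]\ \ge\ \Pr\bigl[Z\ge x\bigr]\ \ge\ \frac{1}{4}e^{-x^{2}}\ \ge\ \frac{1}{4}e^{-2t^{2}/\mu},
\]
which is the asserted bound. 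The two displayed cases of the lemma then differ only in the admissible range of $t$, and that range is inherited verbatim from the two regimes of Slud's inequality quoted above, which is why the hypotheses $p\le\frac14$ (resp. $p\le\frac12$ together with $t\le m(1-2p)$) are the precise ones stated.

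The step I expect to be the main obstacle is establishing the Gaussian tail bound $\Pr[Z\ge x]\ge\frac{1}{4}e^{-x^{2}}$ uniformly in $x\ge 0$, together with recognizing that the exponent $e^{-x^{2}}$ (rather than the ``natural'' $e^{-x^{2}/2}$) is forced. The point is that $\Pr[Z\ge x]\sim\frac{1}{x\sqrt{2\pi}}e^{-x^{2}/2}$ for large $x$, so any clean bound of the form $c\,e^{-x^{2}/2}$ eventually fails; the faster-decaying $e^{-x^{2}}$ has exactly enough room to absorb the missing $1/x$ polynomial factor, which is also the source of the factor $2$ (rather than $1$) in the exponent $2t^{2}/\mu$. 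I would prove it by studying $g(x)=\Pr[Z\ge x]-\frac{1}{4}e^{-x^{2}}$: it is positive at $x=0$ (where $g(0)=\frac{1}{4}$) and vanishes as $x\to\infty$, and on the intermediate range I would combine the standard sandwich $\frac{1}{\sqrt{2\pi}}\frac{x}{1+x^{2}}e^{-x^{2}/2}\le\Pr[Z\ge x]$ with a short derivative/monotonicity argument. If one prefers a self-contained route that avoids quoting Slud, the same conclusion follows from an exponential change of measure to $\mathrm{Bin}(m,q)$ with $q=p+t/m$, summing the likelihood ratio over a one-standard-deviation window above the tilted mean; there the analogous obstacle becomes the deterministic estimate $m\,D(q\,\|\,p)\le 2t^{2}/\mu$ over the full admissible range, with the Stirling $1/\sqrt{m}$ corrections cancelling against the window width.
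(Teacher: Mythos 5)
The paper never proves this lemma: it is quoted, with citation, from Mousavi's note \cite{Chernoff10} and used purely as a black box in the proof of Theorem \ref{CLbound-bt}, so there is no internal proof to compare yours against. What you have reconstructed is in substance the standard proof of this reverse Chernoff bound, and the route of the cited source: Slud's inequality to dominate the binomial upper tail by the Gaussian tail --- whose two regimes are exactly the lemma's two hypotheses, as you correctly identify --- followed by the elementary bound $\Pr[Z\ge x]\ge\frac14e^{-x^2}$ and the observation $1-p\ge\frac12$, which turns $e^{-x^2}$ with $x^2=t^2/(\mu(1-p))$ into $e^{-2t^2/\mu}$. Three caveats are worth recording. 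First, what you flag as the ``main obstacle'' is actually the easy part: the Gaussian bound has a clean two-line proof, namely $\bigl(1-2\Pr[Z\ge x]\bigr)^2=\Pr\bigl[|Z_1|\le x,\,|Z_2|\le x\bigr]\le\Pr\bigl[Z_1^2+Z_2^2\le 2x^2\bigr]=1-e^{-x^2}$, hence $\Pr[Z\ge x]\ge\frac12\bigl(1-\sqrt{1-e^{-x^2}}\bigr)\ge\frac14e^{-x^2}$; no monotonicity analysis is needed. Second, the step you do wave away --- ``the rounding perturbs only lower-order factors'' --- is where the real work lies if one wants the inequality with its explicit constants: Slud's inequality holds at integer thresholds $k$, and replacing $t$ by $k-\mu\le t+1$ costs a factor $e^{-2(2t+1)/\mu}$, which is not lower order when $\mu$ and $t$ are of constant size (for $\mu=1$, $t=2$ it is $e^{-10}$); absorbing it requires spending the slack still available from $1-p\ge\frac34$ in the first case and from the Gaussian bound, together with a separate treatment of small $t$. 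Third, Slud's $p\le\frac14$ regime still requires $k\le m$, and indeed the lemma's ``any $t\ge0$'' cannot be literally correct for $t>m(1-p)$, since the left-hand side is then zero; this defect is inherited from the quoted statement rather than introduced by you, but your claim of ``no upper restriction on $k$'' repeats it. None of this affects the paper's application, where the lemma is invoked with $t=\frac{m\varepsilon}{2}$ and $\mu=\frac{m(1-\varepsilon)}{2}$, deep in the regime where such constant-size corrections are immaterial.
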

\begin{theorem}\label{CLbound-bt}
Any classical randomized algorithm for Balancedness Testing Problem 3 requires $\Omega(\frac{1}{\varepsilon^{2}})$ queries.
\end{theorem}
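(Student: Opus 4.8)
The plan is to reduce Problem 3 to the classical task of distinguishing a fair coin from an $\varepsilon$-biased coin, and then to combine Yao's minimax principle with the anti-concentration bound of Lemma \ref{iid}. Writing $c=\Pr_x[f(x)=1]$, we have $C(f)=1-2c$, so $f$ is balanced iff $c=\tfrac12$ and $f$ is $\varepsilon$-far balanced iff $c\in\{(1-\varepsilon)/2,(1+\varepsilon)/2\}$; it suffices to treat the boundary values. I would take the two input families in which $f$ is a random function with values i.i.d.\ $\mathrm{Bern}(c)$, for $c=\tfrac12$ (balanced world $H_0$) and $c=(1-\varepsilon)/2$ (far world $H_1$). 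Since $m\ll 2^n$, querying $f$ at distinct points returns i.i.d.\ $\mathrm{Bern}(c)$ bits, repeated queries give no information, and by symmetry of these families the only relevant statistic of the transcript is the observed count $\xi=\sum_{i=1}^m X_i$.

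By Yao's principle it then suffices to fix a deterministic tester and bound its error on the mixture $\tfrac12 H_0+\tfrac12 H_1$. After the reduction such a tester is a decision region $R\subseteq\{0,\dots,m\}$ on the count, and its error is $\tfrac12(1-P_{H_0}(R))+\tfrac12 P_{H_1}(R)\ge \tfrac12-\tfrac12\,\mathrm{TV}\bigl(\mathrm{Bin}(m,\tfrac12),\mathrm{Bin}(m,\tfrac{1-\varepsilon}{2})\bigr)$. So the whole theorem reduces to showing that this total-variation distance stays below $\tfrac13$ whenever $m=o(1/\varepsilon^2)$. This is exactly where I would invoke Lemma \ref{iid}: in $H_1$ we have $p=(1-\varepsilon)/2\le\tfrac12$ and $\mu=m(1-\varepsilon)/2$, and taking $t=m\varepsilon/2$ (which satisfies $0\le t\le m(1-2p)=m\varepsilon$) the reverse-Chernoff bound gives
\[
\Pr_{H_1}\!\bigl[\xi\ge \tfrac{m}{2}\bigr]=\Pr_{H_1}[\xi-\mu> t]\ge \tfrac14\exp\!\Bigl(\tfrac{-2t^2}{\mu}\Bigr)=\tfrac14\exp\!\Bigl(\tfrac{-m\varepsilon^2}{1-\varepsilon}\Bigr).
\]
If $m\le (1-\varepsilon)/\varepsilon^2$ this is at least $\tfrac{1}{4e}$, a constant. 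Thus the far distribution places constant mass on the ``balanced'' side $\{\xi\ge m/2\}$, while in $H_0$ the symmetric binomial has mass $\ge\tfrac12$ there too; the symmetric case $c=(1+\varepsilon)/2$ follows by the reflection $\xi\mapsto m-\xi$.

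The step I expect to be the main obstacle is converting this one-sided tail lower bound into the required \emph{upper} bound on total variation (equivalently, a lower bound on the overlap $\sum_k\min(P_{H_0}(k),P_{H_1}(k))$). The anti-concentration bound only certifies that a constant fraction of the far distribution crosses the balanced mean; to conclude that no threshold separates the two binomials with advantage $\tfrac13$ I must also control the balanced distribution near that same crossing point, which I would do through the monotone-likelihood-ratio structure of the two binomials, noting that their means differ by only $m\varepsilon/2$, i.e.\ by $O(\sqrt m\,\varepsilon)=o(\sqrt m)$ standard deviations. As a clean and fully rigorous fallback for this single step, I would instead bound $D_{\mathrm{KL}}\bigl(\mathrm{Bern}(\tfrac{1-\varepsilon}{2})\,\|\,\mathrm{Bern}(\tfrac12)\bigr)=O(\varepsilon^2)$, use additivity over the $m$ independent samples, and apply Pinsker's inequality to get $\mathrm{TV}\le\sqrt{m\cdot O(\varepsilon^2)/2}$, which is below $\tfrac13$ precisely when $m=o(1/\varepsilon^2)$, yielding the contradiction and hence $m=\Omega(1/\varepsilon^2)$. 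A secondary technical point is justifying in the reduction that adaptivity and repeated queries cannot beat the count statistic under these symmetric input families.
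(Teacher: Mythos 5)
Your core argument coincides with the paper's own proof: the paper likewise reduces Problem 3 to distinguishing Hamming weight $\frac{N}{2}$ from $\frac{(1-\varepsilon)N}{2}$ (the function $H_{1}$ from the proof of Theorem \ref{qlower bound-bt}), samples $m$ positions so that the observed bits are i.i.d.\ $\mathrm{Bern}(p)$ with $p=\frac{1-\varepsilon}{2}$, and applies Lemma \ref{iid} with exactly your parameters $t=\frac{m\varepsilon}{2}$, obtaining $\Pr[\xi>\frac{m}{2}]\geq\frac{1}{4}\exp(\frac{-m\varepsilon^{2}}{1-\varepsilon})$. Where you diverge is precisely at the step you flag as the main obstacle: the paper stops at this anti-concentration bound, asserts ``at this point, we cannot tell which case it is,'' and concludes $m=\Omega(\frac{1}{\varepsilon^{2}})$; that is, it treats the constant-probability event $\{\xi>\frac{m}{2}\}$ under the far distribution as already establishing indistinguishability. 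You are right that, stated this way, there is a gap --- a one-sided lower bound on a tail probability does not by itself upper-bound the advantage of an arbitrary tester; one must also control the balanced distribution near the crossing region. Your fallback via $D_{\mathrm{KL}}\bigl(\mathrm{Bern}(\frac{1-\varepsilon}{2})\,\|\,\mathrm{Bern}(\frac{1}{2})\bigr)=O(\varepsilon^{2})$, additivity over the $m$ queries, and Pinsker's inequality closes this cleanly, handles adaptivity automatically (under i.i.d.\ answers the transcript KL telescopes), and in fact renders Lemma \ref{iid} dispensable; in this respect your proof is more complete than the paper's. Two small repairs you should make: (i) an i.i.d.\ $\mathrm{Bern}(\frac{1}{2})$ random function is almost surely not \emph{exactly} balanced, so your hard instances do not quite satisfy the promise; use instead the paper's fixed-weight strings with query positions sampled uniformly with replacement, which yields exactly i.i.d.\ bits, or condition on the exact weight; (ii) the lower bound should really read $\Omega(\min(\frac{1}{\varepsilon^{2}},N))$, since $N$ queries always suffice --- a caveat the paper also ignores.
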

\begin{proof}
Let $Y=(y_{1},y_{2},\ldots,y_{N})$. Consider a computation of the partial function $H_{1}(Y)$ from the proof of Theorem \ref{qlower bound-bt}. Select $m$ strings $y'_{1}, y'_{2},\cdots,y'_{m} $ in $Y$ independently and randomly. Let $p=\frac{1-\varepsilon}{2}$ and $t=\frac{m\varepsilon}{2}$. By Lemma \ref{iid}, we obtain
 \begin{align} \notag
 &\mbox {Pr}\left[(\sum_{i=1}^{m}y'_{i}-\mu)>t\right]\\ \notag
 =&\mbox {Pr}\left[(\sum_{i=1}^{m}y'_{i}-m\frac{1-\varepsilon}{2})>\frac{m\varepsilon}{2}\right]\\
 =&\mbox {Pr}\left[\sum_{i=1}^{m}y'_{i}>\frac{m}{2}\right]\\ \notag
\geq&\frac{1}{4}\mbox{exp}\left(\frac{-2t^{2}}{\mu}\right)\\ \notag
=&\frac{1}{4}\mbox{exp}\left(\frac{-m\varepsilon^{2}}{1-\varepsilon}\right). \notag
 \end{align}
If $m=o(\frac{1}{\varepsilon^{2}})$, then $\mbox {Pr}\left[\sum_{i=1}^{m}y'_{i}>\frac{m}{2}\right]\geq\frac{1}{4}$. At this point, we cannot tell which case it is. Repeating a constant number of experiments, the error probability can reach $\frac{1}{3}$. Therefore, $m=\Omega(\frac{1}{\varepsilon^{2}})$.
\end{proof}
\begin{remark}
According to Theorem \ref{Cubound-bt} and Theorem \ref{CLbound-bt}, the randomized query complexity of our Problem 3 is $\Theta(\frac{1}{\varepsilon^{2}})$.
\end{remark}

\section{Conclusions}\label{Sec6}

In this paper, we have given classical and quantum algorithms for the problem of testing the identity of Boolean functions, which  may be thought of as  a special case of testing the isomorphism and the affine equivalence of Boolean functions. At the same time, we have also proved optimality of presented algorithms. By a combination of those results, we have obtained an optimal separation in the query complexities of the following identity testing problems: $\Theta(\frac{1}{\varepsilon})$ versus $\Theta(\frac{1}{\sqrt{\varepsilon}})$. In addition, we have extend the idea of the D-J algorithm to the balancedness testing and correlation testing problems.

Our results naturally raise a variety of related open problems for future explorations.  For example, what are  properties of isomorphic Boolean functions and affine equivalent Boolean functions? In particular, what is  the (optimal) quantum query complexity  for  affine equivalent Boolean functions? In more general cases, given two unknown Boolean functions $f,g$, the testing  problem is to determine whether $|C(f,g)|\leq a$ or $|C(f,g)|\geq b$, where $0<\varepsilon\leq b-a\leq1$, under the promise that one of these cases holds. Obviously, we have already solved some special cases ($\varepsilon$ should be a relatively small number, see the following Table 1 for more details, where CQC means Classical query complexity  and QQC means Quantum query complexity). We found that the query complexity of quantum or classic algorithms  with an error depends not only on the distance between $b$ and $a$, but also on the location of $b$ and $a$  and is independent of $n$. The question is, for example, when $b-a=\varepsilon$, the quantum query complexities of the Identity testing problem and the Balancedness testing problem are $\Theta(\frac{1}{\sqrt{\varepsilon}})$ and $\Theta(\frac{1}{\varepsilon})$ respectively. At the same time, when $b-a=\varepsilon$, the classical query complexities of the above two questions are  $\Theta(\frac{1}{\varepsilon})$ and  $\Theta(\frac{1}{\varepsilon^{2}})$ respectively. Besides, although the expression of quantum or classical query complexity function is different in above two problems, the separation between quantum and classical algorithms is same in these two special cases. Whether the same separation properties between quantum and classical algorithms can be generalized to general situations (i.e., for different $a,b$ values) remains to be further investigated.

\subsection*{Acknowledgements}

The authors would like to thank the referees and Professor Calude for important comments that help us improve the quality of the manuscript.
This work  is  supported in part by the National Natural Science Foundation of China (Nos. 61572532,  61876195), and the Natural Science Foundation of Guangdong Province of China (No. 2017B030311011).

\begin{table}[!htbp]
\caption{Query complexity results}
\centering
\begin{tabular}{|l|c|l|c|c|}
\hline
Problem& Algorithm&Parameter values & QQC& CQC\\
\hline
Identity testing&Bounded Error&$a=1-2\varepsilon,b=1$ & $\Theta(1/{\sqrt{\varepsilon}})$& $\Theta(1/{\varepsilon})$\\
\hline
 Deutsch-Jozsa Problem&Exact& $a=0, b=1$& $O(1)$ & $\Theta(N)$\\
\hline
Correlation testing&Exact& $a=\varepsilon, b=1$& $O(1)$ & $\Theta(N)$\\
\hline
Balancedness testing&Bounded Error&$a=0,b=\varepsilon$  & $\Theta(1/{\varepsilon})$ & $\Theta(1/{\varepsilon^{2}})$\\
\hline
\end{tabular}
\end{table}


\end{document}